\newenvironment{proof}{\noindent\textbf{Proof.}}{\hspace*{\fill}$\Box$\medskip}
\newtheorem{theorem}{Property}
\let\oldbullet\bullet
\renewcommand{\bullet}{{\oldbullet}}
\let\oldcirc\circ
\renewcommand{\circ}{{\oldcirc}}
\newcommand\bigforall{\mbox{\Large $\mathsurround0pt\forall$}}
\begin{document}

\title{Fermionic phases and their transitions induced by competing finite-range interactions}

\author{M.\ Szyniszewski}
\email{mszynisz@gmail.com}
\affiliation{Department of Physics, Lancaster University, Lancaster LA1 4YB,
  United Kingdom}

\author{H.\ Schomerus}
\affiliation{Department of Physics, Lancaster University, Lancaster LA1 4YB,
  United Kingdom}

\date{\today}

\begin{abstract}
  We identify ground states of one-dimensional fermionic systems subject to
  competing repulsive interactions of finite range, and provide phenomenological
  and fundamental signatures of these phases and their transitions.
  Commensurable particle densities admit multiple competing charge-ordered
  insulating states with various periodicities and internal structure. Our
  reference point are systems with interaction range $p=2$, where phase
  transitions between these charge-ordered configurations are known to be
  mediated by liquid and bond-ordered phases. For increased interaction range
  $p=4$, we find that the phase transitions can also appear to be abrupt, as
  well as being mediated by re-emergent ordered phases that cross over into
  liquid behavior. These considerations are underpinned by a classification of
  the competing charge-ordered states in the atomic limit for varying
  interaction range at the principal commensurable particle densities. We also
  consider the effects of disorder, leading to fragmentization of the ordered
  phases and localization of the liquid phases.
\end{abstract}

{\maketitle}

\section{Introduction}

One-dimensional fermionic physics substantially differs from its
higher-dimensional counterparts. Usual descriptions of interactions, such as
Fermi liquid theory, break down, \cite{Peierls1955, Voit1995} which results in
an absence of quasi-particle excitations in the system. Under many
circumstances, the appropriate theory describing these systems is the
Tomonaga-Luttinger liquid.\cite{Tomonaga1950, Luttinger1963, Mattis1965,
Haldane1981, Haldane1981prl} However, in the presence of repulsive interactions
and at commensurable particle densities the system can form charge-ordered
phases,\cite{Peierls1930, Mott1937, Mott1949, Peierls1955, Bari1971, Gruner1988}
in which it displays insulating properties.

The nature of the quantum phase transition between liquid and charge-ordered
phases has been uncovered in detailed theoretical studies.\cite{Frohlich1954,
Kuper1955, Hubbard1963, Fourcade1984, Giamarchi1997} The main factor that drives
the phase transition is the competition between the kinetic energy and the
interaction terms that can order the system. In particular, lattice fermion
models exhibit an interplay of short-range kinetic quantum fluctuations arising
from the uncertainty principle and repulsive finite-range interactions that
cause the insulating phases. \cite{Gomez-Santos1993, Schmitteckert2004,
Mishra2011, Dodaro2017} These studies have revealed various transition
scenarios, including the emergence of a `strange metallic phase' that was later
identified to be of bond order \cite{Voit1992, Tsiper1997, Nakamura1999,
Zhuravlev2000, Sengupta2002, Duan2011} -- a dimerized phase with alternating
bond strengths and without charge ordering. To determine these quantum phase
transitions precisely, various methods have been proposed, such as investigation
of the ground-state curvature, \cite{Schmitteckert2004} structure factors,
\cite{Mishra2011, Duan2011} bond entropy, \cite{Molina2007} and scaling of the
gap. \cite{Mishra2011} Decaying behaviour of the correlation functions was also
found to be a distinguishing feature for the aforementioned quantum phases.
\cite{Hallberg1990, Duan2011} Since spinless fermions in one dimension are
equivalent to hard-core bosons, \cite{JordanWigner1928, Lieb1961, Katsura1962,
Haldane1979} these observations are also of great interest for analogous bosonic
systems. \cite{Fisher1989} Recent advancements in optical lattices have indeed
made it possible to engineer condensed matter systems \cite{Lubasch2011,
Cazalilla2011} that allow to directly observe the liquid-to-insulator
transition. \cite{Buchler2003, Liu2005, Haller2010}

Overall, however, the understanding of these transitions is still restricted to
a small number of relatively simple and mutually well compatible charge-ordered
states. As the range of interactions increases, the variety of competing
charge-ordered states increases rapidly. This situation raises a number of
unresolved questions. On one hand, the transitions between these phases may
proliferate as well, and could squeeze out the insulating behavior. On the other
hand, the liquid phases could be suppressed depending on the complexity of the
charge configurations of competing states. Moreover, the bond-ordered phases may
survive the introduction of additional interactions, or, completely new
transition scenarios could arise.

In this paper, we address these questions within a model that exhibits multiple
transitions between a variety of ordered states of varying compatibility. This
leads to a rich phase diagram where we can explore how insulating phases survive
when the range of the repulsive interactions is increased, and which transitions
between different insulating phases can occur. The competing interactions of
finite range give rise to a multitude of charge-ordered phases, which we
systematically classify at the principal critical particle densities. We then
investigate a hierarchy of signatures that characterize the phases and their
transitions at the fundamental and phenomenological level.

Inspection of the atomic limit where the kinetic energy term vanishes allows us
to systematically identify the candidate charge-ordered phases. In this limit,
the phase transitions are sharp and are only driven by considerations of the
interaction energy, while liquid phases are absent. The carrier mobility at a
finite kinetic energy gives scope for liquid behavior that can intervene between
the charge-ordered states. The consequences are investigated numerically using
the infinite-system density matrix renormalization group (iDMRG) approach with a
ground state represented as an infinite matrix product state
(iMPS).\cite{White1992, White1993, Schollwock2005, McCulloch2007, McCulloch2008,
Schollwock2011, MPToolkit} Many of the resulting features are already visible on
the phenomenological level, as we demonstrate for the experimentally accessible
kinetic energy density and the bond-order parameter, which display
characteristic discontinuities at many (but not all) of the phase transitions
identified on a more fundamental level. For the latter we employ density-density
correlation functions that capture long-range charge ordering, as well as the
bipartite entanglement entropy, which displays characteristic scaling in
critical phases.

As our main findings, we observe that depending on the compatibility of the
ordered states, the liquid phase can be strongly suppressed to the extent that
the transition appears to remain directly insulator-to-insulator. Furthermore,
we uncover the re-emergence of simple charge-ordered phases that mediate the
transition between more complicated ones, and exhibit a crossover to liquid
behavior at one of the phase boundaries.

We also consider the implications of disorder, which affects the charge ordering
by inducing fragmentization and further suppresses liquid behavior via
localization. At large disorder strength the system displays the characteristics
of a universal fragmented insulating phase.

This paper is organized as follows. In Sec.\ \ref{sec:model} we present the
model, method, and further background for this work. The charge-ordered phases
of the model in the atomic limit are described in Sec.\ \ref{sec:atomic}.
Section~\ref{sec:finitet} discusses the consequences of a finite kinetic energy,
where the emergent liquid behavior is supplemented by the direct and
crossover-mediated transitions between charge-ordered phases described above.
The disordered system is studied in Sec.\ \ref{sec:disorder}, and our
conclusions are given in  Sec.\ \ref{sec:conclusions}. The Appendix contains
details about the classification of charge-ordered phases for principal critical
particle densities in the atomic limit.

\section{Model, background and methods}\label{sec:model}

We base our investigations on a model of spinless fermionic particles that move
on a one-dimensional chain of size $L$ and are interacting through a
finite-range repulsive potential of maximal range $p$. The disorder-free
Hamiltonian of this model is given by\cite{Gomez-Santos1993}
\begin{equation}
  H = - t \sum_{i = 1}^L \left( c_i^{\dag} c_{i + 1} + \text{h.c.} \right) +
  \sum_{i = 1}^L \sum_{m = 1}^p U_m n_i n_{i + m}, \label{eq:Ham}
\end{equation}
where $c_i^{\dagger}, c_i$ are fermionic creation and annihilation operators on
site $i=1,\ldots,L$, $n_i = c_i^{\dagger} c_i$ are the corresponding
particle-number operators, $t$ determines the kinetic energy, and $U_m$ is the
interaction energy between two particles that are $m\leq p$ sites apart. All
interactions are assumed to be repulsive ($U_m > 0$). While only the ratios
$U_m/t$ matter for the properties of the system, we will treat these scales
independently as this facilitates the discussion of the atomic limit ($t\to 0$).
The particle density is denoted as $Q = L^{-1}\langle \sum_i n_i \rangle$.
Disorder can be included via a term
\begin{equation}
  H_{\text{dis}} = \sum_{i = 1}^L h_i \left( n_i - \frac{1}{2} \right),
  \label{eq:Hdis}
\end{equation}
with uniformly distributed random potentials $h_i \in [- W, W]$ at disorder
strength $W$.

In the seminal Ref.~{\onlinecite{Gomez-Santos1993}}, the potential energy is
strictly convex ($U_{m+1}+U_{m-1}>2U_m$), which assures that there is at most
one insulating phase for any given particle density $Q$ in the system.  These
phases can then be investigated assuming a hierarchy of well-separated energy
scales $t \ll \cdots \ll U_3 \ll U_2 \ll U_1$, hence close to the atomic limit.
Under these conditions the system is found to sustain a charge-ordered
insulating phase at any commensurable particle density
\begin{equation}
  Q_m = 1 / m, \quad m = p + 1, p, \ldots, 2,
  \label{QC}
\end{equation}
while otherwise the system behaves as a Luttinger liquid.

A distinctively more non-trivial behavior can be encountered at these critical
densities if the interaction potential is not convex, so that several
charge-ordered states can compete at the same commensurable particle density.
The convexity condition was abandoned in previous studies of the case $p=2$,
where the system is also known as the $t$-$V$-$V'$ model. {\cite{Emery1988,
Hallberg1990, Poilblanc1997, Franco2003, Schmitteckert2004, Seo2006, Molina2007,
Mishra2011}} This revealed that two charge-ordered states can compete at half
filling, and that the transition between these phases is mediated by a liquid
phase and bond-ordered phases. In this paper, we explore this competition for
the much broader range of charge orderings that occur at larger values of the
interaction range $p$. In the $\{ U_m \}$ phase diagram, this gives rise to
multiple instances of charge-ordered phases separated by intervening states that
mediate their transition, which are the main focus of this work.

The scene will be set by the analytical classification of charge-ordered phases
in the atomic limit $t\to 0$, while the consequences of a finite kinetic energy
are investigated numerically. We adopt a density-matrix renormalization group
approach \cite{White1992, White1993, Schollwock2005} based on a tensor-network
formulation \cite{Perez-Garcia2007, Verstraete2008, Perez-Garcia2008} where the
target states are represented by matrix product states, \cite{McCulloch2007,
McCulloch2008, Schollwock2013, Schollwock2011, Hubig2015} and utilize for this
the Matrix Product Toolkit {\cite{MPToolkit}} code together with our
implementation of the Hamiltonian~\eqref{eq:Ham}. This approach circumvents,
e.g., the restriction to small system sizes encountered in exact diagonalization
and the fermionic sign problem encountered in Quantum Monte Carlo approaches.
\cite{Sandvik2010} To investigate the ground state near the thermodynamic limit,
the desired state of the system is represented as an iMPS, which accounts for an
infinite number of unit cells. Note that iDMRG used in the iMPS context is
different from the infinite-size algorithm of DMRG in the context of finite
systems.\cite{McCulloch2007, McCulloch2008} During each step of iDMRG the
filling is kept at $Q$ (the U(1) symmetry is preserved -- for details, including
a discussion of spontaneous breaking of discrete symmetries, see
Ref.~\onlinecite{McCulloch2008}).

The iMPS unit cell size is chosen to make sure that the system is commensurable
with all possible insulating phases determined in the atomic limit, whereby we
avoid the frustration of any relevant charge-ordered state. Specifically, for a
half-filled ($Q=1/2$) system with $p = 2$, the possible charge-ordered states
have periods two and four, so that we choose a unit cell of 4 sites. For $Q=1/2$
but $p=4$, the phases are far more richer, which requires a unit cell of size
24. The maximal number of saved states (bond dimension $\chi$) in the iDMRG
procedure is 1000.

Using these tools, we characterize the phases by a set of complementary
signatures. For the most phenomenological description we consider the kinetic
energy density
\begin{equation}
  T = \frac{1}{L} \left\langle \sum_{i = 1}^L \left( c_i^{\dag} c_{i + 1} +
  \text{h.c.} \right) \right\rangle.
 \label{eq:t}
\end{equation}
This is a single-particle observable that probes the particle mobility between
neighboring sites and can, in principle, be assessed in atom-optical experiments
by time-of-flight measurements of atoms released from the optical lattice.

The extent of bond-order is addressed by the order parameter,\cite{Mishra2011}
\begin{equation}
  O_{\mathrm{BO}} = \frac{1}{L} \left\langle \sum_{i = 1}^L (-1)^i\left( c_i^{\dag} c_{i + 1} +
  \text{h.c.} \right) \right\rangle,
 \label{eq:obo}
\end{equation}
which constitutes a staggered version of the kinetic energy density. This
parameter measures the amount of the dimerization in the system. We report its
absolute value, which is invariant under the translation of the measured state.

We note that $O_\mathrm{BO}$ can also be finite in certain charge-ordered
states. This ambiguity is resolved by supplementing this quantity with
additional information. The required detailed insight into the charge ordering
is provided by the density-density correlation functions
\begin{equation}
  N_m = \frac{1}{L} \left\langle \sum_{i = 1}^L n_i n_{i + m}
  \right\rangle ,
  \label{eq:nm}
\end{equation}
which probe the ordering of particles that are $m$ sites apart, and allow to
further discriminate charge-ordered from bond-ordered and liquid phases. To
describe the long-range effects in the system, we exploit that $N_m$ develops an
oscillating behavior in $m$. More precisely, we observe that the limit
\begin{equation}
  \lim_{k \to \infty} N_{m + kP} = N_m^{\infty}, \quad m=1,\ldots,P
  \label{eq:nminf}
\end{equation}
exists, where $P$ is the unit-cell size of the charge order. We call
$N_m^{\infty}$ the extrapolated density-density correlator. This quantity
describes the long-range charge correlations in the system.

Finally, on the most fundamental level we characterize the quantum phases and
transitions by the scaling of the bipartite von Neumann entanglement entropy
\begin{equation}
  S = - \text{tr} (\rho_A \log_2 \rho_A),
\end{equation}
where $\rho_A$ is the reduced density matrix of a subchain $A$. Away from
quantum-critical behavior, $S$ scales as the system's boundary (the well-known
area law) \cite{Vidal2003, Latorre2004, Eisert2010}, and therefore converges
with increasing bond dimension $\chi$ in the charge-ordered and bond-ordered
phases. If the system is critical, the entropy is expected to increase
logarithmically with $\chi$ \cite{Calabrese2004, Tagliacozzo2008}, which in our
investigation occurs at phase transitions and in the liquid phase. In the iDMRG
algorithm, the entropy is calculated during each step {\cite{Schollwock2011}},
and therefore requires no additional computational cost.

\section{Atomic limit}\label{sec:atomic}

\begin{table}[t]
  \centering
  \caption{Number of distinct charge-ordered insulating phases in the atomic
    limit $t\to 0$ of the model \eqref{eq:Ham}, for different interaction ranges
    $p$ and commensurable particle densities $Q$. For details of the
    construction see Appendix \ref{sec:app1}. \label{tab:insulatingphases}}
  \begin{tabular}{ll|ccccccc}
    \hline \hline
    &  & $Q =$ &\multicolumn{6}{c}{$\longleftarrow$} \\
    &  & 1/2 & 1/3 & 1/4 & 1/5 & 1/6 & 1/7 & $\cdots$\\
    \hline
    $p =$ & 1 & 1 &  &  &  &  &  &  \\
    & 2 & 2 & 1 &  &  &  &  &  \\
    & 3 & 3 & 3 & 1 &  &  &  &  \\
    ${\downarrow}$ & 4 & 5 & 7 & 4 & 1 &  &  &  \\
    & 5 & 8 & 12 & 7 & 5 & 1 &  &  \\
    & 6 & 12 & $\geqslant 63$ & $\geqslant 23$ & 9 & 6 & 1 &  \\
    & $\vdots$ &  &  &  &  &  &  & $\ddots$\\
    \hline \hline
  \end{tabular}
\end{table}

To prepare the investigation of quantum phase transitions between the insulating
phases of different charge order, we first inspect the atomic limit of $t \to 0$
at different values of $p$ and critical densities $Q=Q_m$ [see Eq.~\eqref{QC}].
In this limit we can identify the distinct charge-ordered phases by purely
combinatorial energetic considerations.

The most trivial case occurs at density $Q = Q_{p + 1}$, where the fermions can
be spread out evenly across the system so that they are outside of the range of
their interactions. This then defines a universal ground-state with vanishing
energy, which is $(p + 1)$-fold degenerate.

For density $Q = Q_p$, the ground state can constitute any one of $p$ distinct
candidate phases, which we enumerate by an index $\alpha=1,\ldots,p$. As shown
in Appendix \ref{sec:insulating-construction}, these consist of $N / (p - \alpha
+ 1)$ blocks of a single fermion accompanied by $(\alpha - 1)$ empty sites, and
$N (p - \alpha) / (p - \alpha + 1)$ blocks of a single fermion accompanied by
$p$ empty sites, where $N$ is the number of fermions in the considered segment.
The competition between these phases is governed by their energy $E_\alpha =
NU_{\alpha} / (p - \alpha + 1)$, so that the ground-state phase $\alpha$ is
selected by the condition
\begin{equation}
  U_{\alpha} < \frac{p - \alpha +
  1}{p - \beta + 1} U_{\beta}, \quad \beta\neq \alpha.
  \label{eq:udom}
\end{equation}
An important example is the case $p = 2$, $Q=Q_2=1/2$, which corresponds to the
$t$-$V$-$V'$ model at half filling studied in
Refs.~\onlinecite{Schmitteckert2004, Mishra2011}. The phase diagram then
consists of two phases: one with a ground-state unit cell of $(\bullet \circ)$,
where $\bullet$ is an occupied site and $\circ$ is an unoccupied site; and one
with a unit cell of $(\bullet \bullet \circ \circ)$. The two phases have energy
densities of $U_2 / 2$ and $U_1 / 4$, respectively, and the phase transition
occurs along the $U_1 = 2 U_2$ line.

\begin{table}[t]
  \centering
  \caption{(Color online) Ground-state (GS) unit cells and their energies in the
  atomic limit of half-filled systems  ($Q=1/2$) with interaction range $p=2$
  and $p=4$. In the pictorial representations of the unit cells, $\bullet$
  denotes an occupied site and $\circ$ denotes an empty site. The degeneracy $f$
  accounts for the translational freedom of these phases. The colors designate
  their position in the phase diagrams of Fig.~\ref{fig:exampleCDW}.
  \label{tab:exampleCDW}}
  \begin{tabular}{cccc}
    \hline \hline
    GS unit cell & Energy density & $f$ & \\
\hline
    \multicolumn{4}{c}{$p = 2, Q = 1/2$}\\
    \hline
    $\bullet \circ$ & $U_2/2$ & 2 &
    {\color[HTML]{00AAFF}$\blacksquare$}\\
    $\bullet \bullet \circ \circ$ & $U_1/4$ & 4 &
    {\color[HTML]{55FF00}$\blacksquare$}\\
    \hline
    \multicolumn{4}{c}{$p = 4, Q = 1/2$}\\
    \hline
    $\bullet \circ$ & $(U_2 + U_4)/2$ & 2 &
    {\color[HTML]{00AAFF}$\blacksquare$}\\
    $\bullet \bullet \circ \circ$ & $(U_1 + U_3 + 2 U_4)/4$ & 4 &
    {\color[HTML]{55FF00}$\blacksquare$}\\
    $\bullet \bullet \bullet \circ \circ \circ$ & $(2 U_1 + U_2 +
    U_4)/6$ & 6 & {\color[HTML]{FFFF00}$\blacksquare$}\\
    $\bullet \bullet \bullet \bullet \circ \circ \circ \circ$ & $
    (3 U_1 + 2 U_2 + U_3)/8$ & 8 & {\color[HTML]{FF0000}$\blacksquare$}\\
    $\bullet  \circ\circ \bullet \circ \bullet \bullet \circ$ & $
    (U_1 + 2 U_2 + 3 U_3)/8$ & 8 & {\color[HTML]{800080}$\blacksquare$}\\
    \hline \hline
  \end{tabular}
\end{table}

\begin{figure}[t]
  \includegraphics[width=0.9\columnwidth]{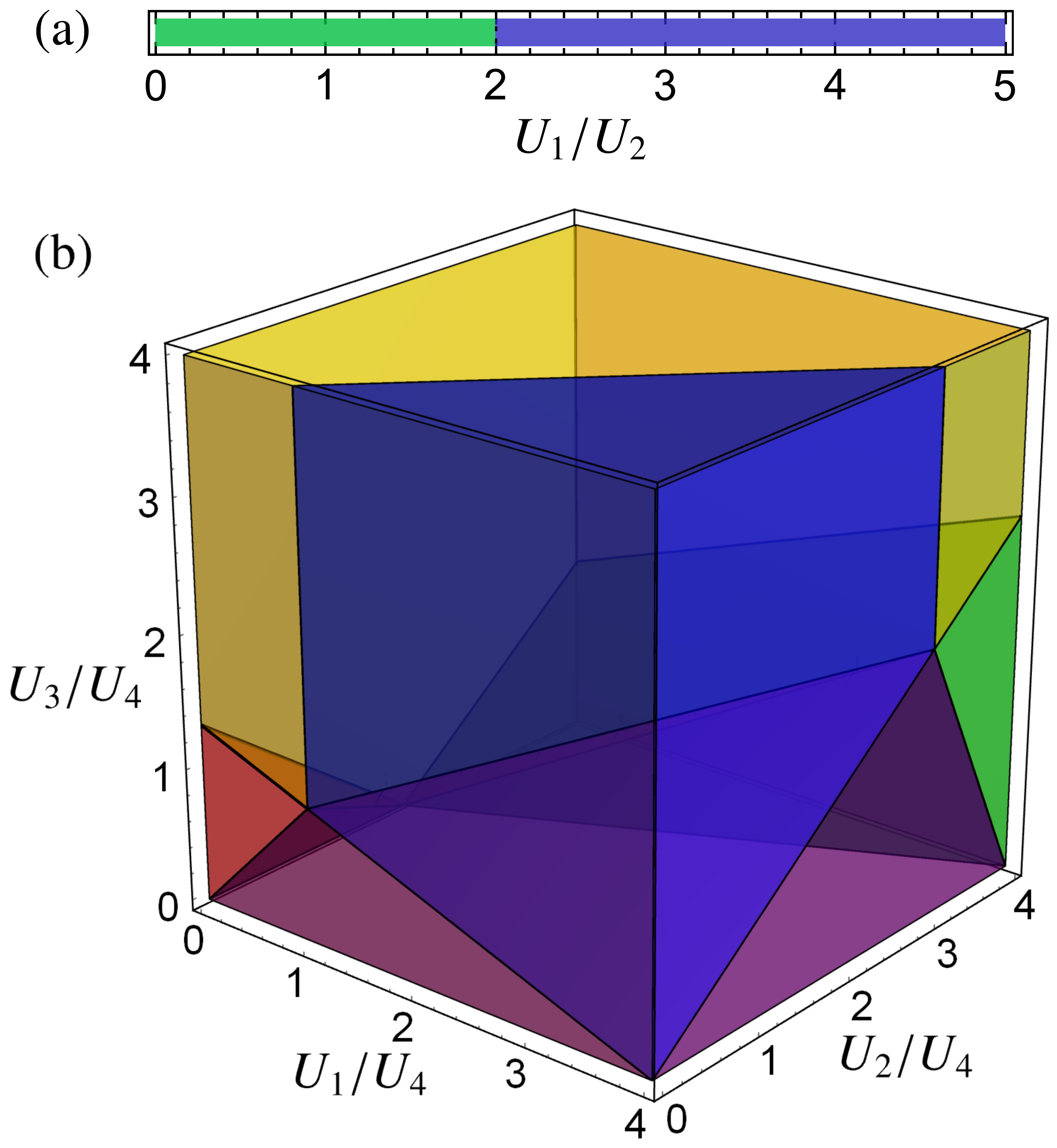}
  \caption{(Color online) Phase diagrams of charge-ordered  ground states in the
  atomic limit of half-filled systems  ($Q=1/2$) with interaction range $p=2$
  (a) and $p=4$ (b).  In  both cases we exploit the freedom to choose $U_p$ as
  the energy scale. The colors indicate the phases listed in
  Tab.~\ref{tab:exampleCDW}.\label{fig:exampleCDW}}
\end{figure}

At higher critical densities a much wider variety of possible charge orderings
emerges, whose competition can be assessed by a combinatorial analysis. Table
\ref{tab:insulatingphases} shows an overview of the number of insulating phases
that we could determine in the atomic limit of systems with $p\leq 6$. The
number of phases grows rapidly with the interaction range $p$ and depends
distinctively on the density $Q$. The detailed configurations of the
corresponding charge-ordered states are specified in Appendix
\ref{sec:insulatingPhasesList}. This reveals phases with highly intricate
internal structure, indicating that there are no simple rules governing the
ground-state properties of systems at high critical particle densities.

In the remainder of this paper we focus on the representative example of $p = 4,
Q=Q_2=1/2$ (hence again half filling). Table \ref{tab:exampleCDW} lists the
distinct charge-ordered phases for this case.  Although the first four phases
follow a relatively simple ordering pattern, the fifth phase displays a more
intricate internal structure. The corresponding phase diagram is shown in
Fig.~\ref{fig:exampleCDW}, which displays the phases in the space of interaction
parameters $U_1$, $U_2$ and  $U_3$, while $U_4$ serves as the energy scale.

\section{Finite kinetic energy}\label{sec:finitet}

In the atomic limit, the phase transitions between the charge-ordered phases are
abrupt. This situation changes at a finite kinetic energy $t\neq 0$, where the
transitions can be mediated by other phases, such as the liquid and bond-ordered
phases previously encountered in the $t$-$V$-$V'$ model ($p=2$). As the number
of competing phases increases rapidly with larger interaction range, one could
suspect that the phase space may be dominated by the transitions between these
phases, while the insulating phases are only present close to the atomic limit.
Thus, a large interaction range may imply the loss of insulating properties of
the system. Furthermore, it is per se unclear how the distinct internal
structure of the charge-ordered states affects the nature of the transitions. We
now explore these questions for the case $p=4$ at half filling, corresponding to
the competition of ordered phases listed in Table \ref{tab:exampleCDW}, and
contrast this case with $p=2$. Throughout most of this section, we set $t=1$ to
fix the unit of energy, and utilize the complementary signatures described in
Sect.~\ref{sec:model}.

For $p=2$, we find representative behavior by fixing $U_1=10$ while varying
$U_2$, which allows us to verify the consistency of our results with previous
studies of the $t$-$V$-$V'$ model.\cite{Schmitteckert2004,Mishra2011} For
$p=4$, we find representative results by fixing $U_1 = U_3 = 4$, $U_4 = 1$ and
again varying $U_2$. According to the phase diagram in
Fig.~\ref{fig:exampleCDW}, this covers the region occupied by the phases
$(\bullet \circ)$, $(\bullet\bullet\bullet\circ\circ\circ)$ and
$(\bullet\bullet\circ\circ)$ in the atomic limit. These three phases are
remarkably robust against the introduction of a finite kinetic energy, but to a
varying degree, which leads to the unconventional transition scenarios that are
the main result of this work. In contrast, the other two phases listed for $p=4$
in Tab.~\ref{tab:exampleCDW} occupy a smaller part of phase space and are more
susceptible to suppression by a finite kinetic energy. This is illustrated at
the end of this section for the phase $(\bullet \circ\circ\bullet\circ\bullet
\bullet\circ)$, which is quickly replaced by a bond-ordered phase.

\subsection{Phenomenological signatures}\label{sec:nodis_finitet_results}

\begin{figure}[t]
  \includegraphics[width=0.9\columnwidth]{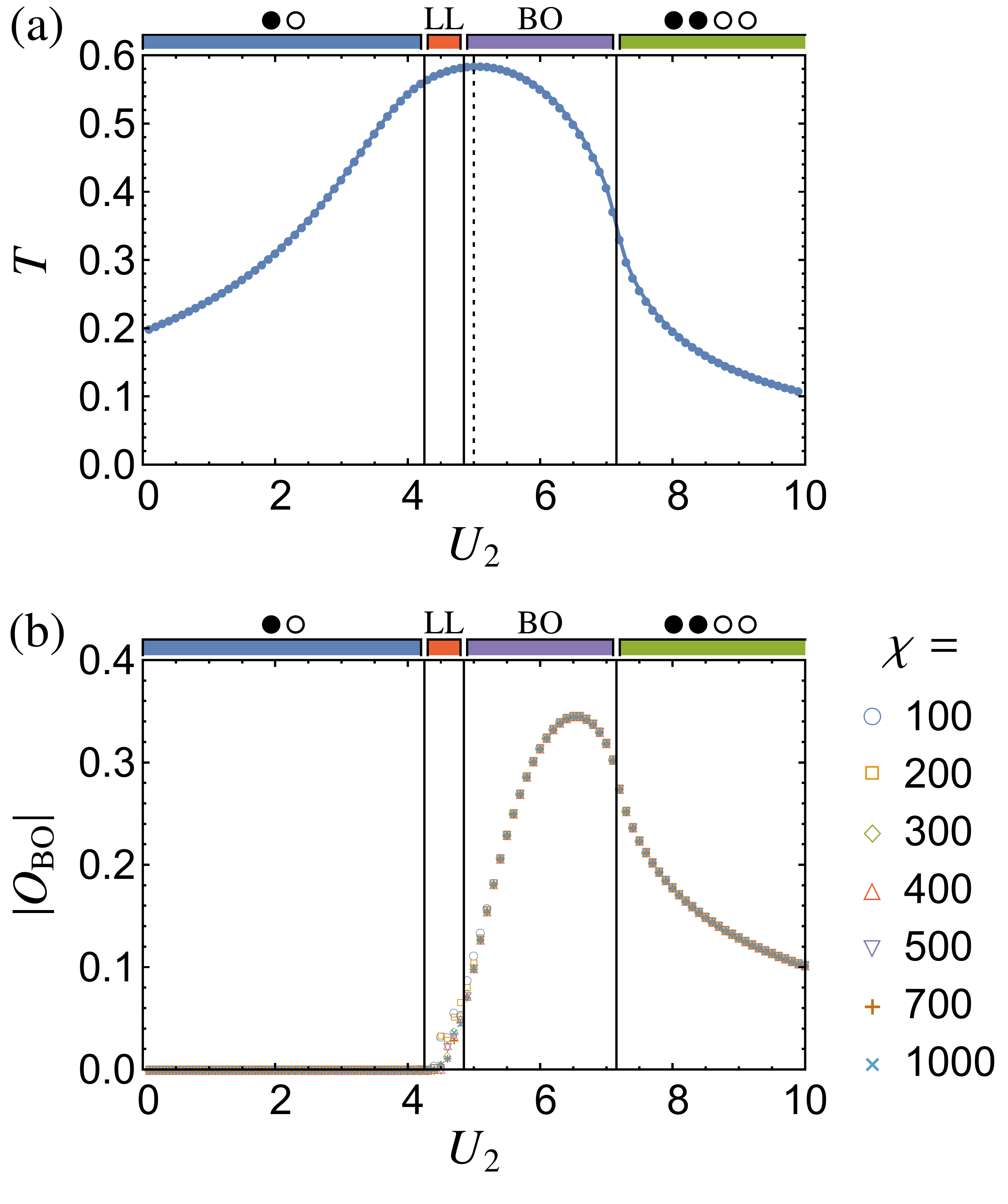}
  \caption{(Color online) Phenomenological signatures of phases and transitions
  in the kinetic energy density $T$ (a) and bond-order parameter
  $O_{\mathrm{BO}}$ (b) at finite kinetic energy parameter $t=1$, for
  interaction range $p=2$ and half filling ($Q=1/2$). In both panels the
  next-nearest-neighbor interaction $U_2$ is varied while the nearest-neighbor
  interaction is set to $U_1=10$. The solid lines indicate the phase transitions
  in this setting, while the dotted line in (a) indicates the transition between
  the two charge-ordered phases in the atomic limit. The kinetic energy density
  only captures a single phase transition  at $U_2 \approx 7.15$, which
  coincides with the transition from the liquid phase into the charge-ordered
  phase $(\bullet \bullet \circ \circ)$. The bond-order parameter vanishes in
  the thermodynamic limit of the phase $(\bullet \circ)$ and the liquid phase,
  but scales differently with increasing bond dimension $\chi$, thereby
  providing signatures of all three phase transitions.
  \label{fig:kinetic_nodisp2}}
\end{figure}

\begin{figure}[t]
  \includegraphics[width=0.9\columnwidth]{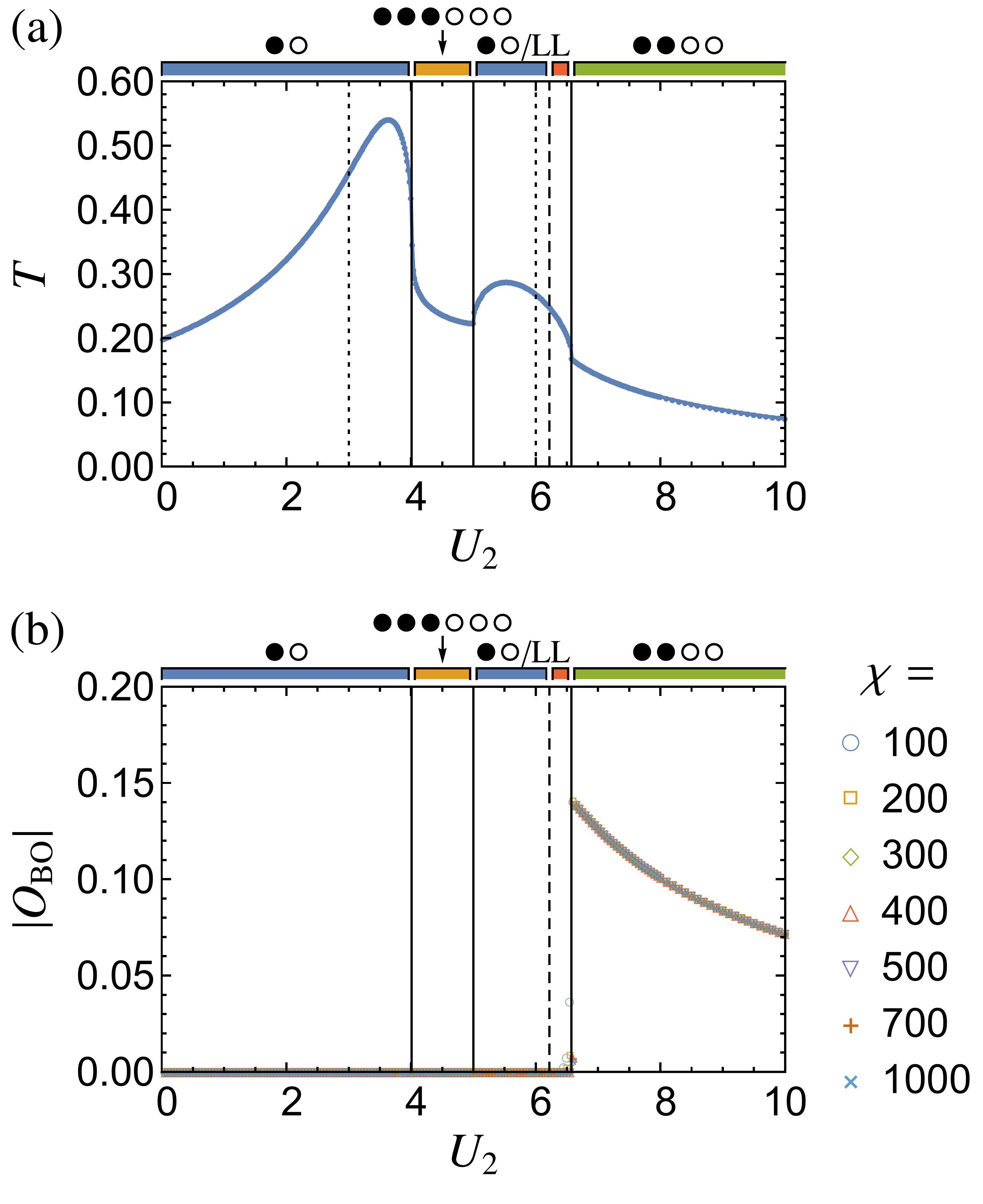}
  \caption{(Color online) kinetic energy density $T$ (a) and bond-order
  parameter $O_{\mathrm{BO}}$ (b) in analogy to Fig.~\ref{fig:kinetic_nodisp2},
  but for increased interaction range $p=4$ and interaction parameters fixed to
  $U_1 = U_3 = 4$ and $U_4 = 1$. In the atomic limit, the transitions between
  the charge-ordered phases occur at $U_2=3,6$ (dotted lines). Discontinuities
  in the derivative of the kinetic energy density indicate three clear phase
  transitions at $U_2 \approx \{ 4.01, 5.00, 6.57\}$ (solid lines). The
  bond-order parameter only captures a single transition into the phase
  $(\bullet\bullet\circ\circ)$, which is the only encountered phase where
  $O_{\mathrm{BO}}$ is finite. Note that the transition between the phases
  $(\bullet\circ)$ and $(\bullet\bullet\circ\circ)$ appears to be abrupt. This
  is verified in the subsequent figures, which also determine the indicated
  nature of the mediating region between the phases
  $(\bullet\bullet\bullet\circ\circ\circ)$ and $(\bullet\bullet\circ\circ)$.
  There, we observe a crossover from a re-emergent charge-ordered state
  $(\bullet\circ)$ to liquid behavior, as indicated by the dashed line.
  \label{fig:kinetic_nodisp4}}
\end{figure}

We first consider the impact of a finite kinetic energy on the phenomenological
level. This is most directly captured by inspection of the kinetic energy
density $T$ [see Eq.~\eqref{eq:t}], which is shown in panels (a) of
Figs.~\ref{fig:kinetic_nodisp2} and ~\ref{fig:kinetic_nodisp4}, and its
staggered version, the bond-order parameter $O_{\mathrm{BO}}$
[Eq.~\eqref{eq:obo}], which is shown in panels (b). We note that the range of
values taken by both quantities is comparable for $p=2$ and $p=4$, which places
us at a similar distance to the atomic limit. The effect of the different
interaction range for both cases is immediately visible.

For $p=2$ (Fig.~\ref{fig:kinetic_nodisp2}), the kinetic energy density increases
as we approach the transition between the two insulating phases in the atomic
limit (dotted line). The analytical behavior of $T$ resolves a single phase
transition, which is signaled by a discontinuity in its first derivative (solid
line). As confirmed by the bond-order parameter, this phase transition coincides
with the transition from the bond-ordered phase to the phase $(\bullet \bullet
\circ\circ)$, where the latter admits a finite values of $O_{\mathrm{BO}}$ as
the cuts $(|\bullet  \bullet | \circ \circ|)$   and $(\bullet | \bullet \circ
|\circ)$ are inequivalent in this phase. The liquid phase is signaled by the
continuing drop of $O_{\mathrm{BO}}$ with increasing bond dimension, as this
order parameter has to vanish in the thermodynamic limit. The bond-order
parameter also vanishes in the phase $(\bullet\circ)$, as the cuts  $(|\bullet
\circ|)$ and  $(\bullet | \circ)$ are equivalent by particle-hole symmetry. The
resulting sequence of phases is marked on top of the panels. The resulting
picture agrees with the previous studies of the $t$-$V$-$V'$ model in
Refs.~\onlinecite{Schmitteckert2004, Mishra2011}, where the phase diagram was
determined from the bond-order parameter and the ground-state curvature. In this
case, therefore, the kinetic energy density carries less detailed information
than the bond-order parameter.

In contrast, for the increased interaction range $p=4$
(Fig.~\ref{fig:kinetic_nodisp4}) we find distinctively more pronounced
signatures of several phases already in the kinetic energy density, with three
clear phase transitions indicated by the analytical behavior of $T$. As we will
confirm below, these coincide with an abrupt transition between the phases
$(\bullet \circ)$ and $(\bullet \bullet \bullet \circ\circ\circ)$, as if still
in the atomic limit; a transition into a mediating region with partially ordered
and partially liquid behavior, to which we will pay particular attention; and
finally the transition into the phase $(\bullet\bullet\circ\circ)$. The
bond-order parameter now carries less insight as it is only finite in the  phase
$(\bullet\bullet\circ\circ)$; in the phase $(\bullet \bullet \bullet
\circ\circ\circ)$ it vanishes as the cuts  $(|\bullet \bullet| \bullet
\circ|\circ\circ|)$  and  $(\bullet |\bullet \bullet |\circ\circ| \circ)$  are
again equivalent by particle-hole symmetry. We therefore do not detect a
separate bond-ordered phase. Instead, as we will argue in the following, the
mediating region contains a crossover between a re-emergent phase $(\bullet
\circ)$  and a liquid phase, resulting in the sequence of phases indicated at
the top of the panels.

\begin{figure}[t]
  \includegraphics[width=0.9\columnwidth]{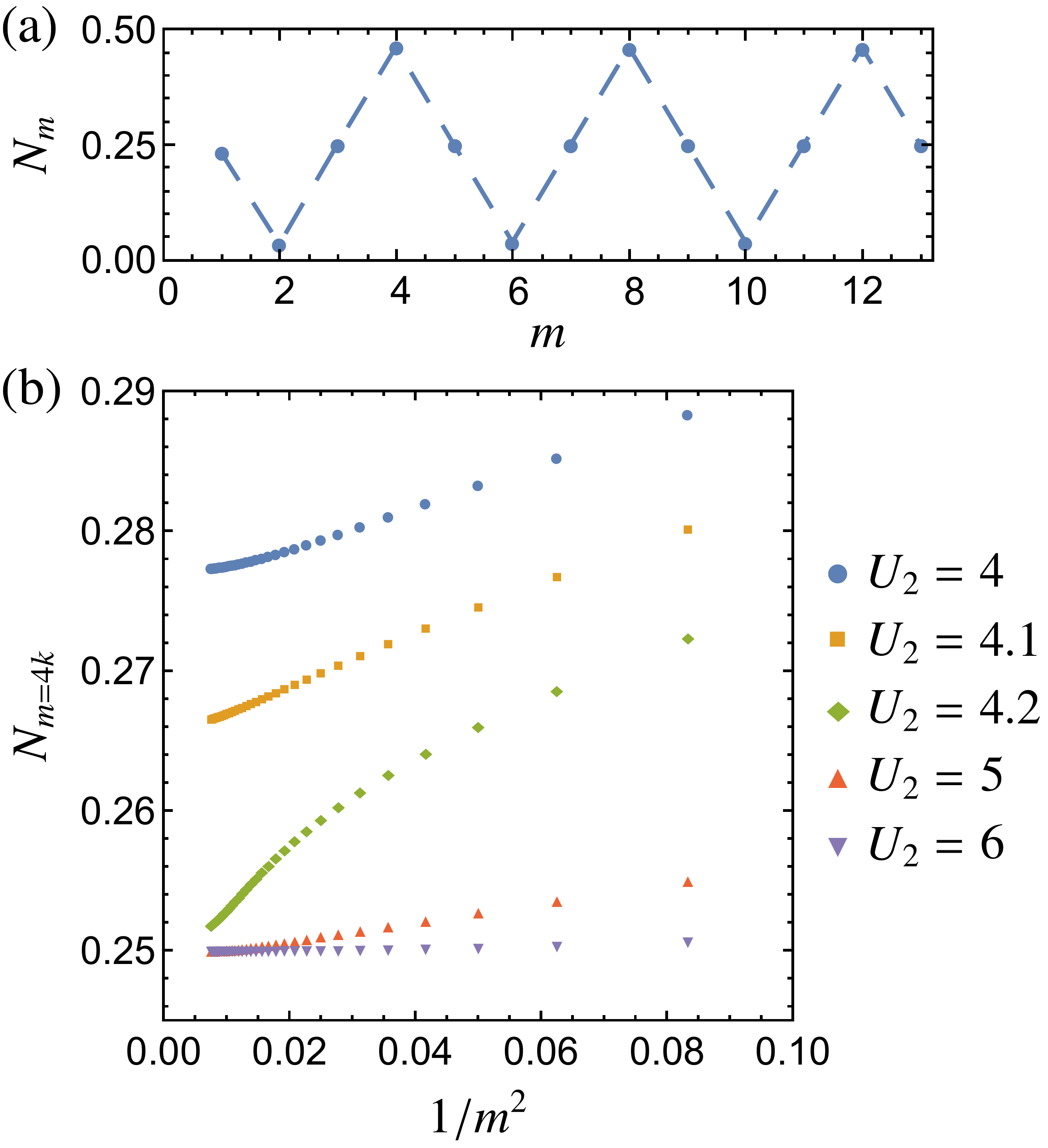}
  \caption{(Color online) (a) Range dependence of the correlation function $N_m$
  for $p=2$, $Q=1/2$, $t=1$, with the interaction potentials $U_1 = 10$ and $U_2
  = 8$ set to values where the system is in the charge-ordered state $(\bullet
  \bullet \circ \circ)$. The function displays a clear oscillatory behavior with
  period 4, reflecting that the charge-ordered character of the phase persists
  at finite kinetic energy. (b) Extrapolation of the correlation function
  $N_{4k}$ with increasing $k$, with $U_1$ and $t$ as above but for various
  values of $U_2$. With increasing range $4k$ this correlation function
  converges to a well-defined value $N_4^\infty$, in accordance to the
  extrapolated correlators stipulated in Eq.~\eqref{eq:nminf}.
  \label{fig:finite_corr_period}}
\end{figure}

\subsection{Correlation functions}

A more detailed characterization of the encountered phases is provided by the
correlation functions $N_m$ defined in Eq.~\eqref{eq:nm}. The utility of these
functions is illustrated in Fig.~\ref{fig:finite_corr_period}(a), where we show
their $m$-dependence for a half-filled system with $p=2$ in the region where we
expect the charge-ordered phase $(\bullet \bullet \circ \circ)$. The correlation
function displays an oscillating behavior in $m$, with a period $P=4$ that
reflects the size of the charge-ordered unit cell. As shown in
Fig.~\ref{fig:finite_corr_period}(b), the limiting quantities  $N_m^{\infty}$
given in Eq.~\eqref{eq:nminf} are indeed well defined.

With these features, the correlation functions give direct insight into the
charge-ordered character of the phases. In the atomic limit, the extrapolated
correlators take values $N_m^{\infty} \in \{ i / P \}$ if the system is in the
insulating phase, where $i$ is an integer. The charge-ordered phase with a unit
cell $(\bullet \circ)$ displays two alternating values $0$ and $1 / 2$, the
phase $(\bullet \bullet \circ \circ)$ admits three possible values alternating
as $(1/4, 0, 1/4, 1/2)$, and the phase $(\bullet \bullet \bullet \circ \circ
\circ)$ admits four possible values alternating as $(1/3,1/6,0,1/6,1/3,1/2)$.
For $t > 0$, the precise values of $N_m^{\infty}$ in a charge-ordered phase are
expected to deviate from the atomic limit, but their periodicity  and the
ordering of the encountered values should be preserved. In the liquid phase, we
expect the long-range correlations to become trivial. The average correlation
function between any two positions in the system should therefore acquire the
value $\langle n_i n_{i + m} \rangle / L^2 \approx \langle n_i \rangle \langle
n_{i + m} \rangle / L^2 = Q^2=1/4$, where we specified the case of half filling.
While these features clearly separate all charge-ordered states, the same value
$1/4$ is also obtained in the bond-ordered phase, which we detected above with
the bond-order parameter.

\begin{figure}[t]
  \includegraphics[width=\columnwidth]{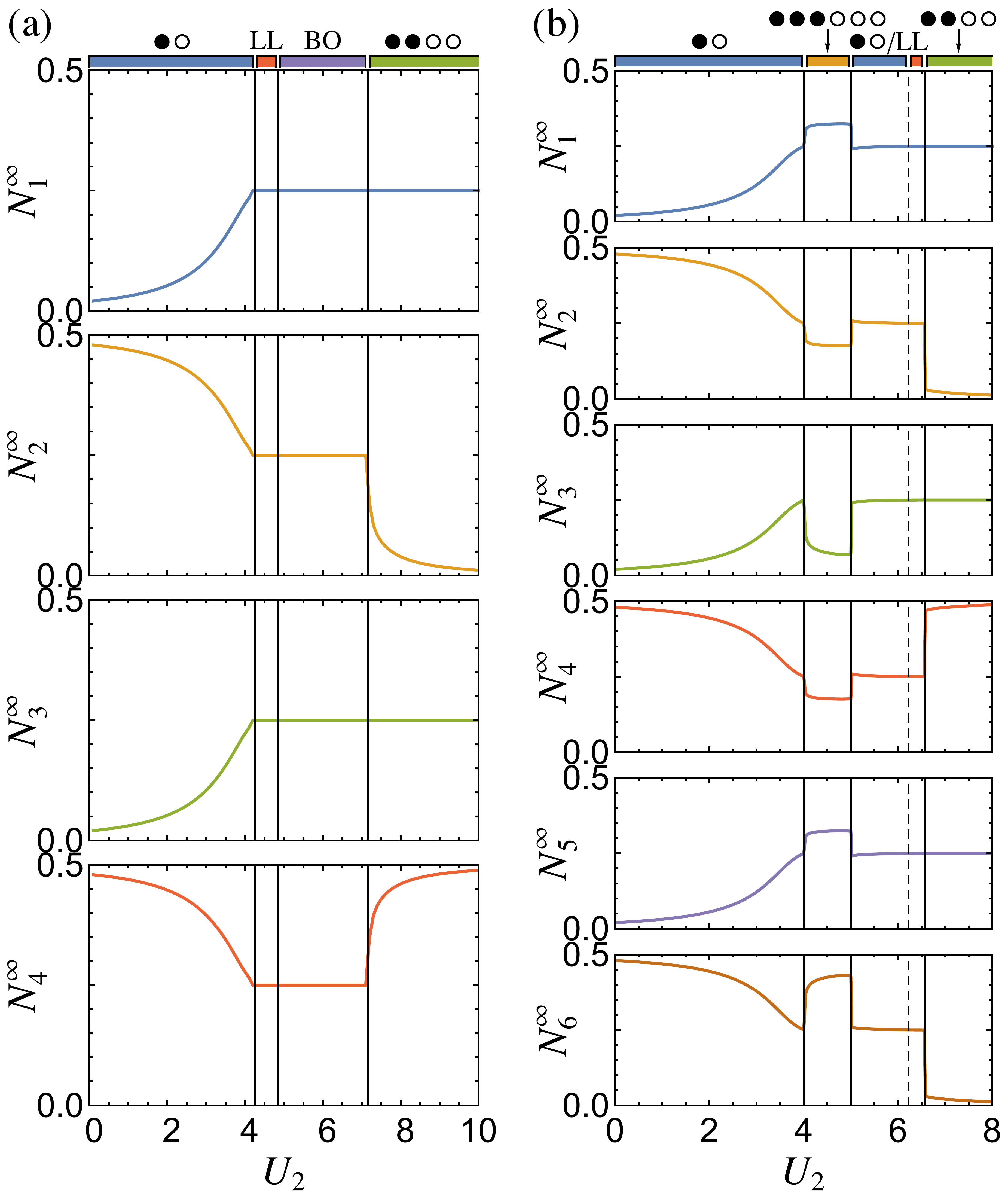}
  \caption{(Color online) Extrapolated correlators $N_m^{\infty}$ for (a) $p =
  2$ and (b) $p = 4$, with system parameters specified as in
  Figs.~\ref{fig:kinetic_nodisp2} and \ref{fig:kinetic_nodisp4}. The phase
  transitions indicated by the solid lines coincide with points where the
  derivative of $N_m^{\infty}$ is discontinuous. The oscillatory behavior of the
  correlators with the range $m$ agrees with the stipulated charge orders. For
  $p=4$, the transition between the phases $(\bullet\circ)$ and
  $(\bullet\bullet\bullet\circ\circ\circ)$ remains abrupt. The behavior in the
  mediating transition region between the phases
  $(\bullet\bullet\bullet\circ\circ\circ)$ and $(\bullet\bullet\circ\circ)$ is
  examined more closely in Fig. \ref{fig:Nminfp4}.
  \label{fig:infinite_correlators}}
\end{figure}

Fig.~\ref{fig:infinite_correlators} shows the extrapolated correlators
$N_m^{\infty}$ for both investigated systems under the same conditions as in
Figs.~\ref{fig:kinetic_nodisp2} and \ref{fig:kinetic_nodisp4}. All
charge-ordered phases can be clearly identified using the expected periodicity
from the atomic limit. We notice that there are regions where all the
correlators reach the value $1/4$, indicating the absence of charge order. The
discontinuities in the derivative of $N_m^{\infty}$ coincide with the phase
transitions detected by $T$ and $O_{\mathrm{BO}}$.

For $p=4$, these results confirm that the transition between the phases
$(\bullet \circ)$ and $(\bullet \bullet \bullet \circ \circ \circ)$ appears to
be sharp, with an undetectable intervening liquid phase. Significantly, the
mediating region between the phases $(\bullet \bullet \bullet \circ \circ
\circ)$ and $(\bullet \bullet \circ \circ)$ indeed exhibits the signatures of
the phase $(\bullet \circ)$. Furthermore, as shown in detail in
Fig.~\ref{fig:Nminfp4}, the values of $N_m^{\infty}$ get progressively closer to
1/4, indicating a possible crossover into the liquid state, as suggested by the
label $(\bullet \circ)$/LL.

\begin{figure}[t]
  \includegraphics[width=0.9\columnwidth]{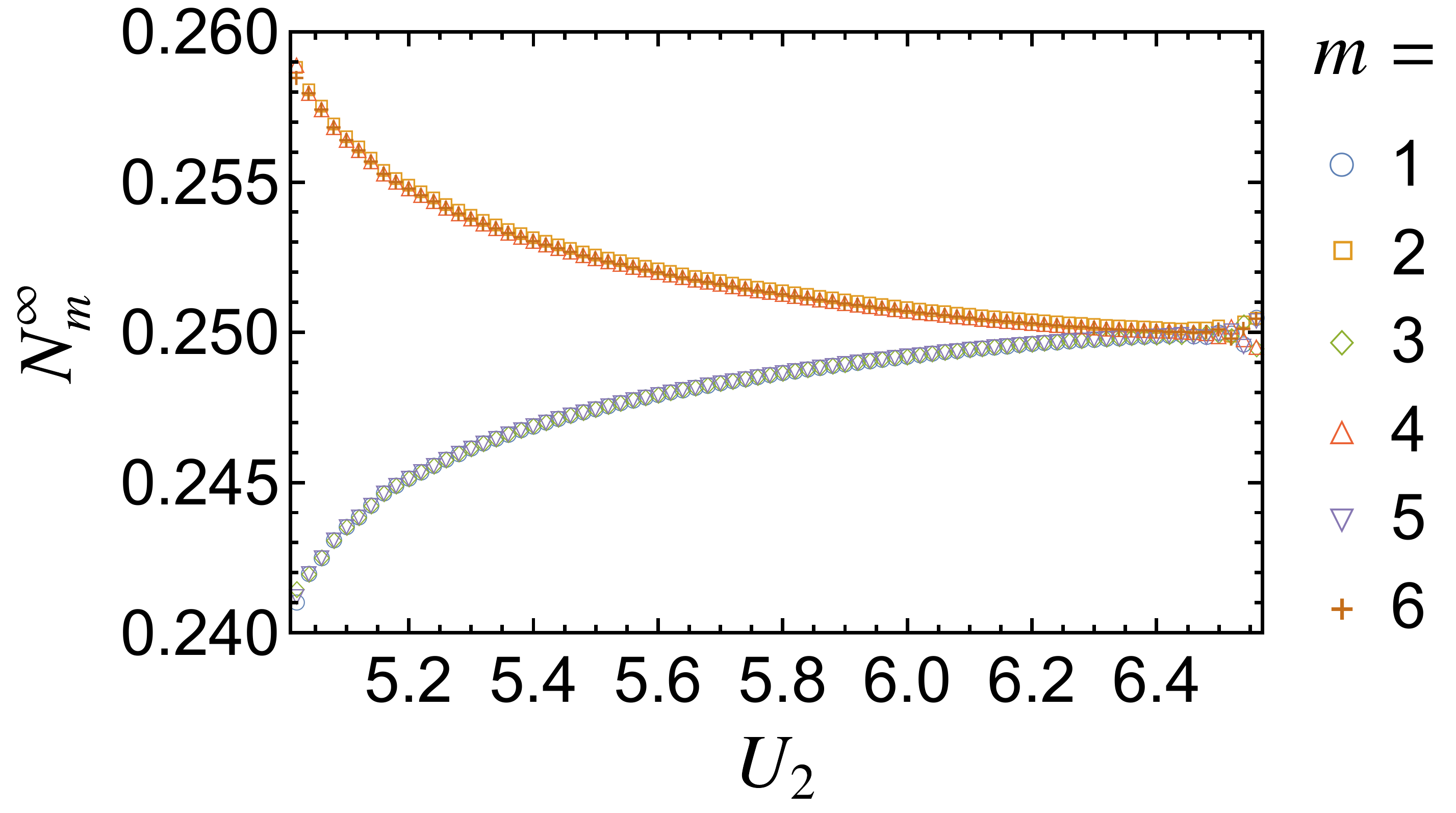}
  \caption{(Color online) Close-up of the extrapolated correlators
  $N_m^{\infty}$ from Fig.~\ref{fig:infinite_correlators} for the mediating
  transition region in the system with $p = 4$. The correlations gradually
  approach the value $1/4$, compatible with a gradual crossover from charge
  order $(\bullet\circ)$ to liquid behavior, where furthermore the results
  converge only slowly.  \label{fig:Nminfp4}}
\end{figure}

\subsection{Entanglement entropy}

To further resolve the details of the mediating transition region we turn to the
entanglement entropy, which is presented in Fig.~\ref{fig:entropy_p24}. In
general, in ordered states the entropy can take multiple values depending on the
position of the cut that bipartites the system. In the case of $p=2$ (panel a),
the mirror symmetry of the insulating phase $(\bullet \circ)$ implies that $S$
remains single-valued, while the insulating phase $(\bullet \bullet \circ
\circ)$ has two possible values, where we account for  particle-hole symmetry
and mirror symmetry. In the bond-ordered phase, the ground state is
characterized by an alternating  local structure, and therefore $S$ again has
two possible values. Finally, in the liquid phase the entropy is single-valued,
but does not converge with increasing bond dimension. Therefore, the entropy
also allows to discriminate the bond-ordered and liquid phases. The transitions
between the different phases are clearly visible in these numerical results, and
agree with the signatures described above.

\begin{figure}[t]
  \includegraphics[width=0.9\columnwidth]{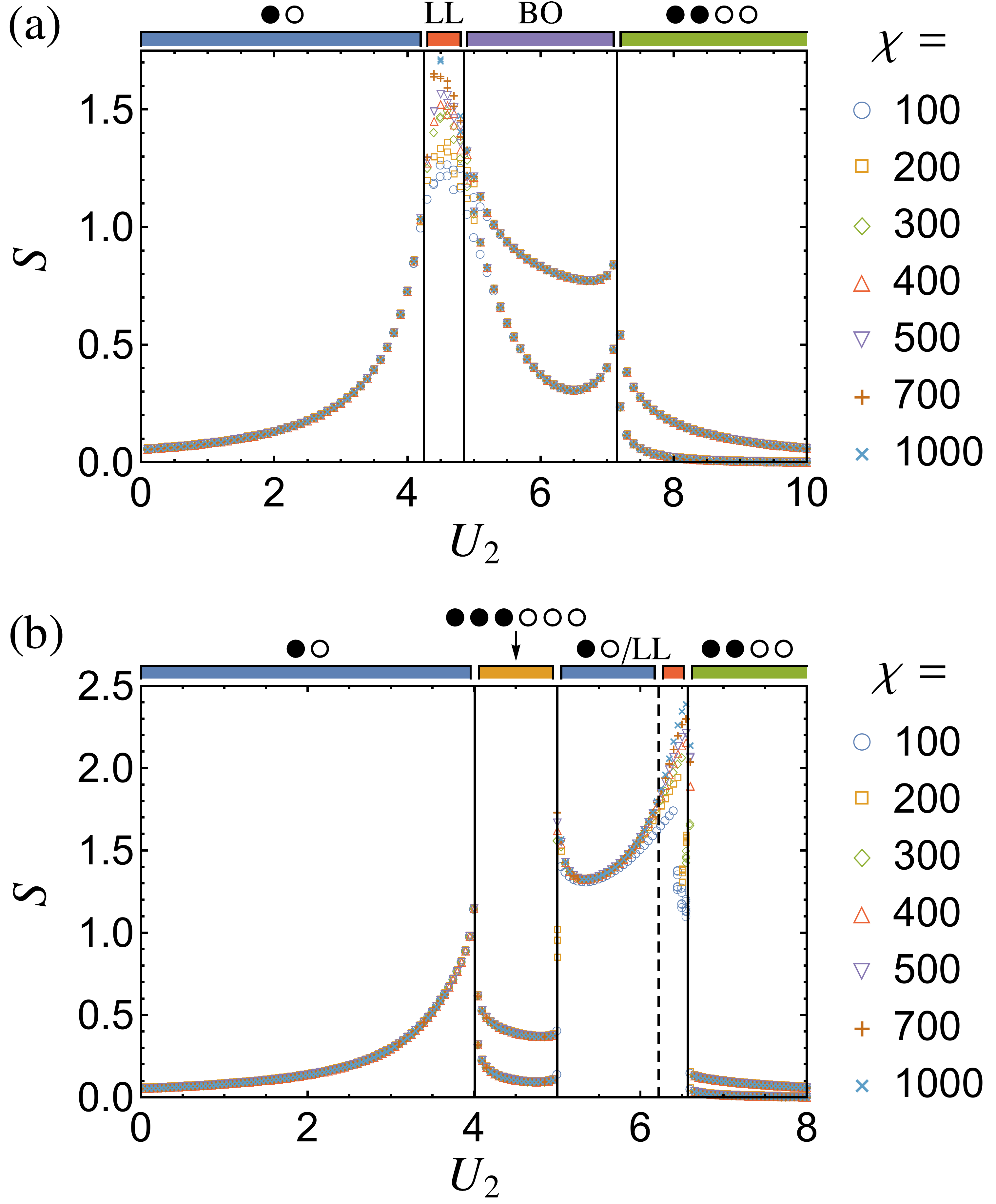}
  \caption{(Color online) Bipartite entanglement entropy for the systems with
  $p = 2$ (a) and $p=4$ (b), with parameters as specified in Figs.\
  \ref{fig:kinetic_nodisp2} and \ref{fig:kinetic_nodisp4}. For $p=4$, no
  critical behavior is detected around the transition between the phases
  $(\bullet\circ)$ and $(\bullet\bullet\bullet\circ\circ\circ)$. However, in the
  transition region between the phases $(\bullet\bullet\bullet\circ\circ\circ)$
  and   $(\bullet\bullet\circ\circ)$ the gradual suppression of charge
  correlations (Fig.~\ref{fig:Nminfp4}) coincides with an onset of finite-size
  scaling with increasing bond-dimension, as expected from a crossover into a
  liquid phase. \label{fig:entropy_p24}}
\end{figure}

Figure~\ref{fig:entropy_p24}(b) shows the entanglement entropy for the
half-filled system with $p = 4$. The entropy can again be multivalued, where the
charge-ordered phases $(\bullet \circ)$ and $(\bullet \bullet \circ \circ)$ and
the liquid phase behave in analogy to the case $p = 2$. In the phase $(\bullet
\bullet \bullet \circ \circ \circ)$ the entropy can have two values, where we
again account for mirror symmetry and particle-hole symmetry. As anticipated
above, the entropy does not detect any indications of a liquid phase between the
phases $(\bullet \circ)$ and $(\bullet \bullet \bullet \circ \circ\circ)$. We
also do not find any indications of the bond-ordered phase, which may be
attributed to the modified energetic conditions from the additional interaction
terms. Most importantly, the results confirm that the transition between the
phases $(\bullet \bullet \bullet \circ \circ \circ)$ and $(\bullet\bullet
\circ\circ)$ is mediated by an ordered state that shares all signatures with the
phase $(\bullet \circ)$. As the ordering in this state approaches the liquid
behavior according to the correlations in Fig.~\ref{fig:Nminfp4}, the system
develops the features of a liquid state, where the entropy continues to increase
with increasing bond dimension. Within the numerically accessible bond
dimensions this takes the form of a transition as indicated above the panel, but
it is also plausible that this behavior indicates a crossover with a rapidly
increasing convergence threshold in the bond dimension.

\subsection{Extended phase diagram}

According to the picture developed above, the phase $(\bullet\circ)$ is present
twice in the phase diagram explored thus far, where it surrounds the  phase
$(\bullet \bullet \bullet \circ \circ \circ)$.  As a phenomenological
explanation for this re-emergent behavior once could suggest that the phase
$(\bullet\circ)$ mediates the transition between the phases  $(\bullet\bullet
\circ\circ)$ and $(\bullet \bullet \bullet \circ \circ \circ)$ as its simpler
structure reflects the required charge-reconfigurations between the latter two
phases. Further insight into the mechanism behind this reemergence is obtained
by sampling the parameter space more widely. For this we keep the potential
energies $U_1 = U_3 = 4, U_4 = 1$ fixed as before and continue to vary  $U_2$,
but also consider values of $t\neq 1$ that supplement the results presented
earlier in this section.

We recall that for $t = 0$, the system is in the atomic limit (see
Sec.~\ref{sec:atomic}), where direct transitions between the charge-ordered
phases occur at $U_2 = 3$ and $U_2 = 6$. For $t\to \infty$, we expect the phase
diagram to display the liquid phase, as the Hamiltonian is then  dominated by
the kinetic term. While we cannot map out the full phase diagram with a high
degree of precision, reasonable estimates of the phases and their transitions
are obtained by limiting the bond dimension $\chi$ to $200$ and $400$ for
representative values of $t$. This leads to the extended phase diagram proposed
in Fig.~\ref{fig:p3phases}. Here, most transitions are captured accurately with
good agreement between the signatures from the extrapolated correlators and the
entropy. The determination of the transition between the phase $(\bullet \circ)$
and the liquid phase requires very high bond dimensions, so that its approximate
position at the accessible bond dimensions is marked by a dashed line.

As can be seen from the diagram, the phase  $(\bullet \bullet \bullet \circ
\circ \circ)$ only exists up to moderate values of the kinetic energy parameter
$t$, while the phase $(\bullet  \circ)$ is distinctively more robust, wraps
around the other phase, and indeed reemerges in the mediating transition region
along the $t=1$ line. At first sight, one would expect that this re-emergent
behavior cannot extend all the way to the atomic limit, as the fully
charge-ordered state $(\bullet \circ)$ acquires a larger energy than the other
charge-ordered states. However, a consistent scenario would see the mediating
state to gradually lose its clear charge order and cross over into the liquid
phase, in analogy to the behavior witnessed in the correlations of Fig.\
\ref{fig:Nminfp4} and the entropy in Fig.\ \ref{fig:entropy_p24}. This assertion
is difficult to verify, as the mediating phase becomes confined to a very small
part of phase space as one approaches the atomic limit. This complication does
not apply to the region $t\approx 1$, where the $(\bullet \circ)$ phase clearly
wraps around the $(\bullet \bullet \bullet \circ \circ \circ)$  phase, resulting
in its re-emergent behavior.

\begin{figure}[t]
  \includegraphics[width=0.8\columnwidth]{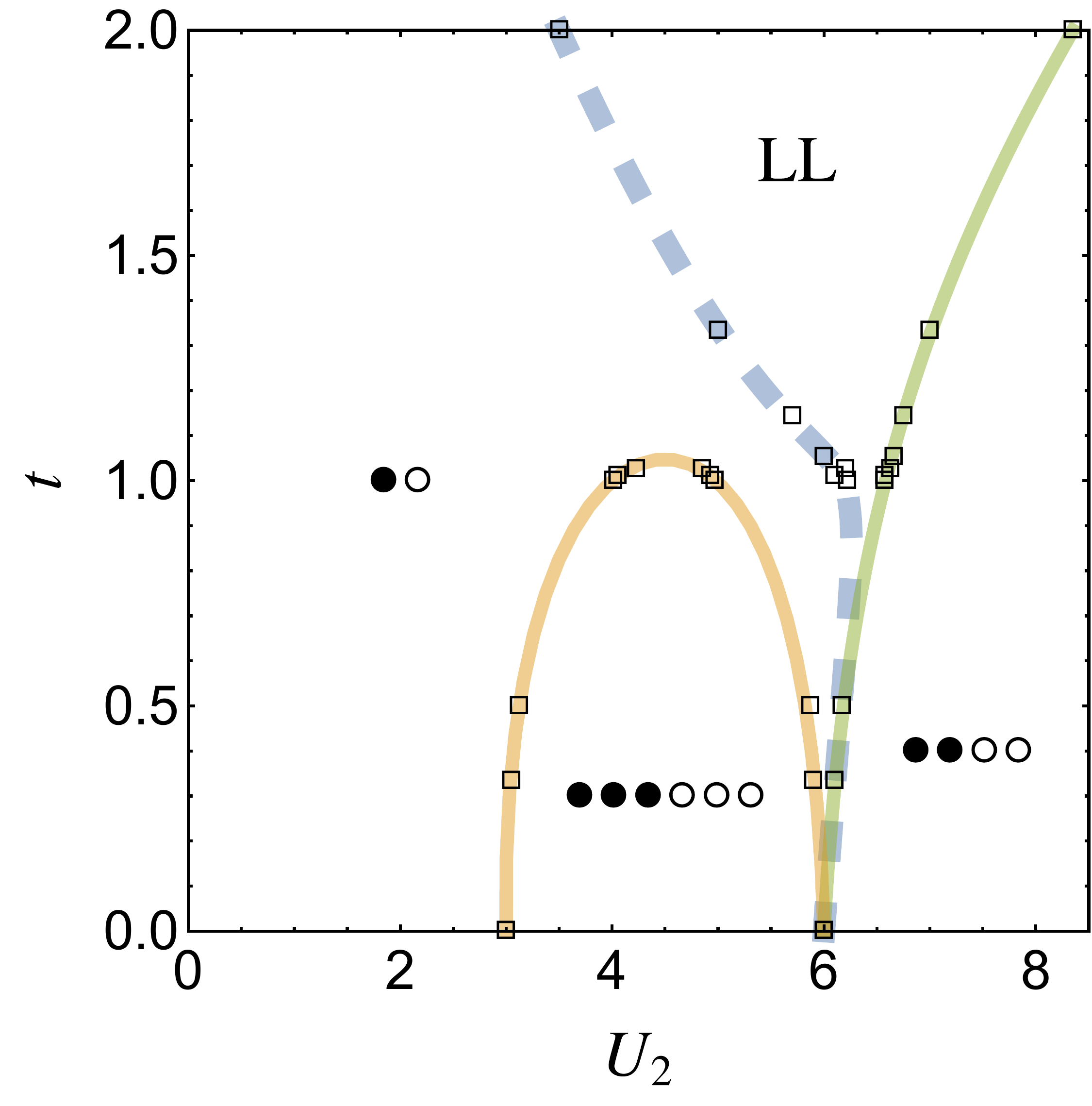}
  \caption{(Color online) Proposed phase diagram for the model from
  Eq.~\eqref{eq:Ham} with interaction range $p = 4$, with interaction parameters
  $U_1 = U_3 = 4, U_4 = 1$ while $U_2$ and the kinetic energy parameter $t$ are
  varied. The phase transitions are determined from the extrapolated correlators
  and the entanglement entropy, in analogy to
  Figs.~\ref{fig:infinite_correlators} and \ref{fig:entropy_p24}, which
  correspond to the case $t=1$. The dashed line indicates the detected onset of
  finite-size scaling with the bond dimension in the crossover from the phase
  $(\bullet \circ)$ to the liquid phase. \label{fig:p3phases}}
\end{figure}

\begin{figure}[t]
  \includegraphics[width=\columnwidth]{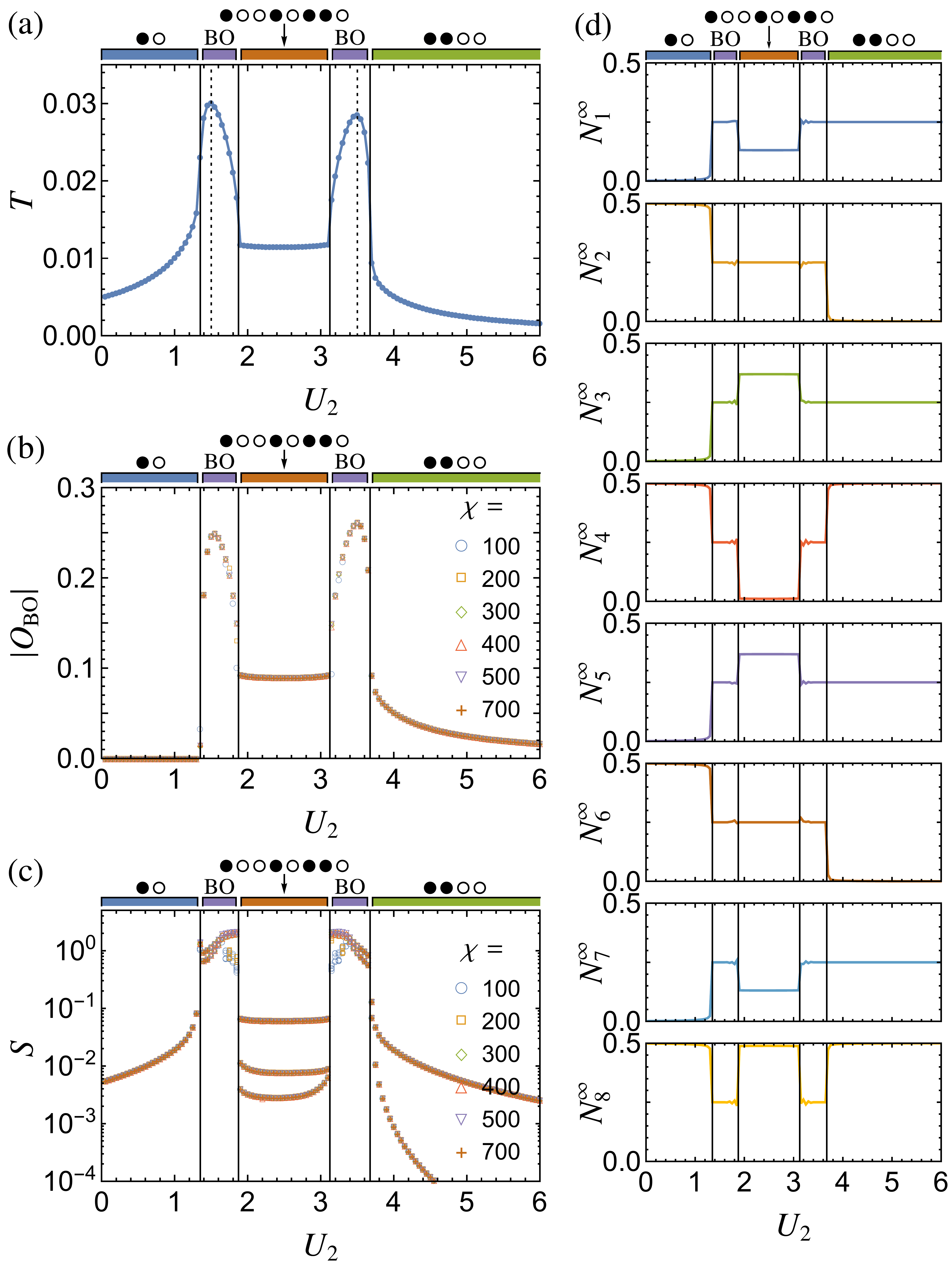}
  \caption{(Color online) Effect of a small kinetic energy parameter $t=0.1$ on
  the phase $(\bullet\circ\circ\bullet\circ\bullet\bullet\circ)$ in the
  half-filled system with $p=4$, as captured by the kinetic energy density $T$
  (a), the bond-order parameter $O_{\mathrm{BO}}$ (b), the bipartite
  entanglement entropy $S$ (c), and the extrapolated correlators $N_m^\infty$
  (d). In contrast to the case described in Fig.~\ref{fig:kinetic_nodisp4},  we
  now set $U_1=4$ and $ U_3= U_4 = 1$, while varying $U_2$ as before. Already at
  the chosen small kinetic energy parameter, the phase
  $(\bullet\circ\circ\bullet\circ\bullet\bullet\circ)$ is driven out by a
  bond-ordered phase, which intervenes in the transition to the other
  charge-ordered states covered in this parameter range. \label{fig:phase5}}
\end{figure}

\subsection{Fragile phases and bond order}\label{sec:fragile}

According to Table~\ref{tab:exampleCDW}, for $p=4$ the considerations above
cover three of the five possible charge-ordered phases identified in the atomic
limit. As the phase $(\bullet\bullet\bullet\bullet\circ\circ\circ\circ)$ is
confined to a small part of parameter space already in the atomic limit (see
Fig.~\ref{fig:exampleCDW}), we here illustrate the susceptibility to a finite
kinetic energy for the phase
$(\bullet\circ\circ\bullet\circ\bullet\bullet\circ)$, which displays the most
complex charge order. To explore this phase we set $U_1=4, U_3= U_4 = 1$ and
again vary $U_2$. We then find that the phase is absent at $t=1$, but can be
detected if we significantly reduce the kinetic energy parameter to $t=0.1$. The
corresponding results are shown in Fig.~\ref{fig:phase5}. We find clear
signatures of three charge-ordered phases, which occur in the sequence
$(\bullet\circ)$, $(\bullet\circ\circ\bullet\circ\bullet\bullet\circ)$,
$(\bullet\bullet\circ\circ)$ in agreement with the atomic limit. However, the
support of the phase $(\bullet\circ\circ\bullet\circ\bullet\bullet\circ)$ is
already much reduced. This occurs in favor of two surrounding regions that both
support a purely bond-ordered phase with no residual charge order, and thereby
share the same characteristics as the bond-ordered phase encountered for $p=2$.
Note that the phase $(\bullet\circ\circ\bullet\circ\bullet\bullet\circ)$ also
admits a finite bond-order parameter, as does again the phase
$(\bullet\bullet\circ\circ)$ already discussed above. These results not only
demonstrate the susceptibility of the phase
$(\bullet\circ\circ\bullet\circ\bullet\bullet\circ)$ to suppression by a finite
kinetic energy, but also show that bond-ordered phases can still occur at this
increased interaction range.

\section{Effects of disorder}\label{sec:disorder}

We now determine the effects of disorder, which is introduced into the
Hamiltonian according to Eq.~\eqref{eq:Hdis}. To study these effects numerically
within the adopted framework we choose disorder configurations that remain
compatible with the previously encountered charge-ordered states, so that these
do not experience any artificial frustration. This requires a disordered unit
cell of size $L$ that is commensurable with all the possible insulating phases
present in the atomic limit, hence a multiple of 4 in the half-filled system
with $p = 2$ and a multiple of 24 in the half-filled system with $p=4$. By
inspecting the variations of our results for $T$, $O_{\mathrm{BO}}$ and $S$ with
$L$ for moderate to strong values of the disorder, we have found it
sufficient to set $L = 20$ for $p = 2$ and keep $L = 24$ for $p = 4$, which has
the added benefit of retaining nontrivial extrapolated correlators $N_m^\infty$
as discussed below. The limit of very weak disorder would require an
ever-increasing disordered unit cell that keeps up with the increasing
localization length, which is beyond the practical scope of the adopted
iDMRG/iMPS approach.

In the atomic limit $t=0$, disorder encourages the fragmentization of
charge-ordered states as the energy expense of a charge configuration can be
overcompensated by the energetic gain from the on-site potential. Furthermore,
previously degenerate configurations such as $(\bullet \bullet \circ \circ)$,
$(\circ \bullet \bullet \circ)$, $(\circ \circ \bullet \bullet)$ and $(\bullet
\circ \circ \bullet)$ now acquire different energies. These reconfigurations
have a direct effect on the long-range correlations, which can be expected to
persist also at finite kinetic energy $t\neq 0$. In this general case, we would
expect the disorder also to  localize the liquid phase, so that the phase space
regions with critical behavior should be suppressed.

\begin{figure}[t]
  \includegraphics[width=0.95\columnwidth]{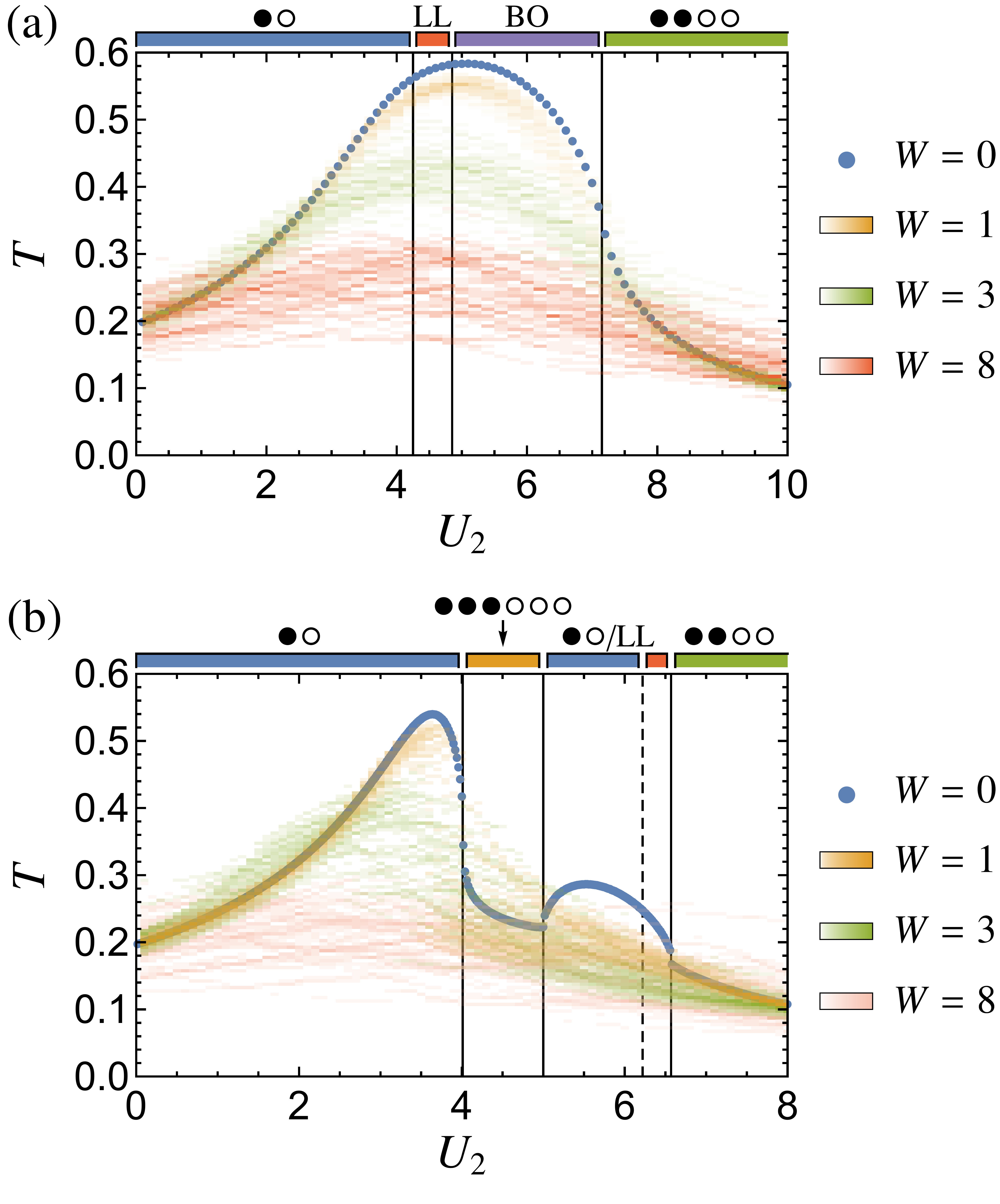}
  \caption{(Color online) Effect of disorder on the kinetic energy density for
  $p = 2$ (a) and $p=4$ (b). The disorder-free case corresponds to
  Figs.~\ref{fig:kinetic_nodisp2} and \ref{fig:kinetic_nodisp4}, and is included
  for guidance. For each finite value of the disorder, the data represent a
  density plot accumulated over 100 disorder realizations. The size of the
  disordered unit cell is $L=20$ ($p=2$) and $L=24$
  ($p=4$).\label{fig:kinetic_dis_change_W_many_p24}}
\end{figure}

On the phenomenological level these anticipated tendencies are again well
captured by the kinetic energy density $T$, as shown by the disorder-averaged
density plots in Fig.~\ref{fig:kinetic_dis_change_W_many_p24}. For $p=2$ (panel
a) the density $T$ drops significantly for increasing disorder strength $W$, in
particular in the transition region between the charge-ordered phases. At the
same time the spread of values of $T$ increases, and all the features present on
the $W = 0$ plot become progressively washed out  so that at $W = 8$ the kinetic
energy becomes essentially independent of $U_2$. A similar trend is present for
$p = 4$ (panel b), where the two prominent peaks present for $W = 0$ get washed
out as one increases the disorder strength. These results are consistent with
the formation of a universal fragmented state at large disorder strength.

\begin{figure}[t]
  \includegraphics[width=\columnwidth]{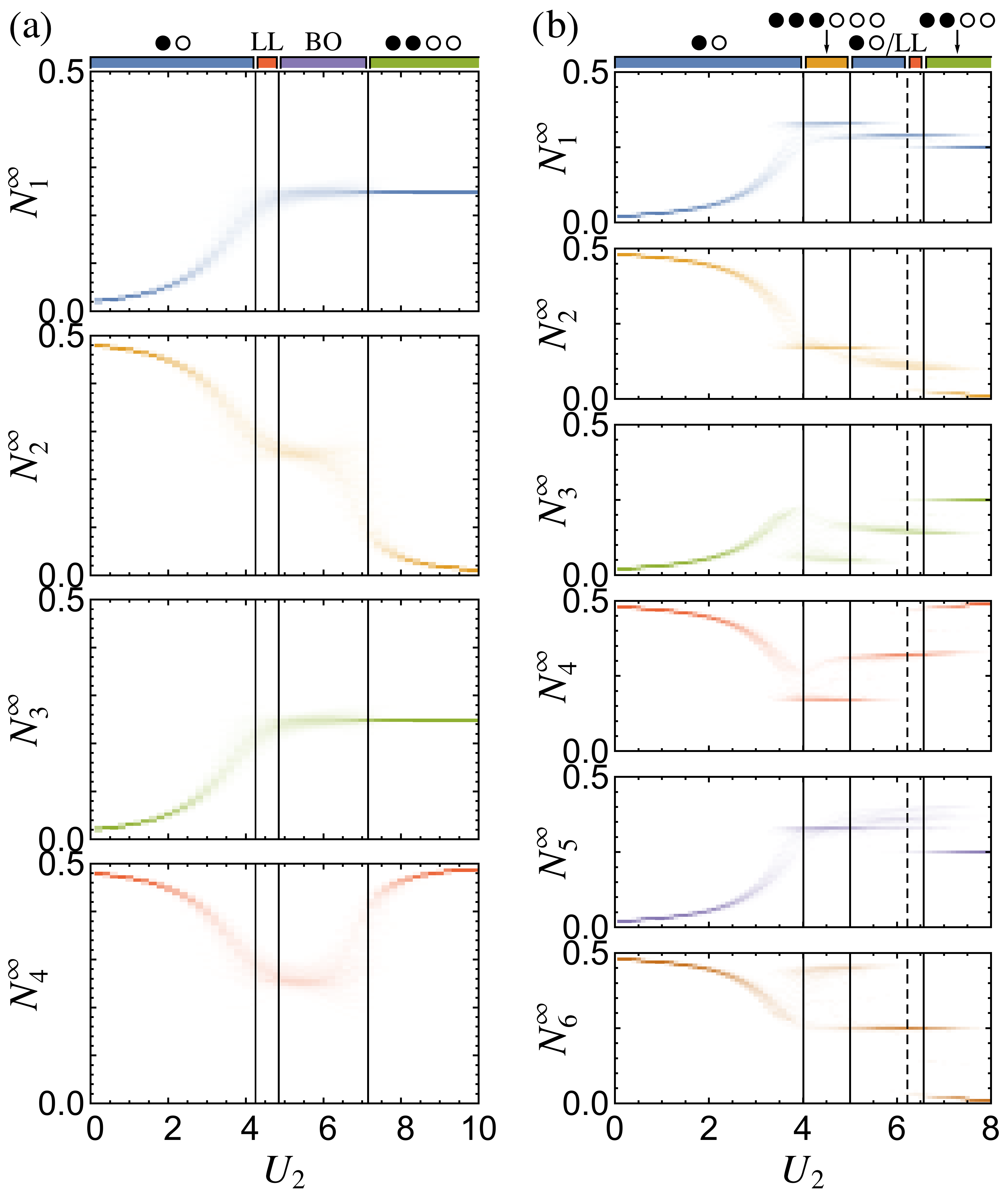}
  \caption{(Color online) Disorder-averaged density plots of the extrapolated
  correlators $N_m^{\infty}$ for the disordered systems specified in
  Fig.~\ref{fig:kinetic_dis_change_W_many_p24}, evaluated at disorder strength
  $W=1$. These correlators remain nontrivial because of the finite size of the
  disordered unit cell. \label{fig:correlators_dis_W1}}
\end{figure}

\begin{figure}[t]
  \includegraphics[width=\columnwidth]{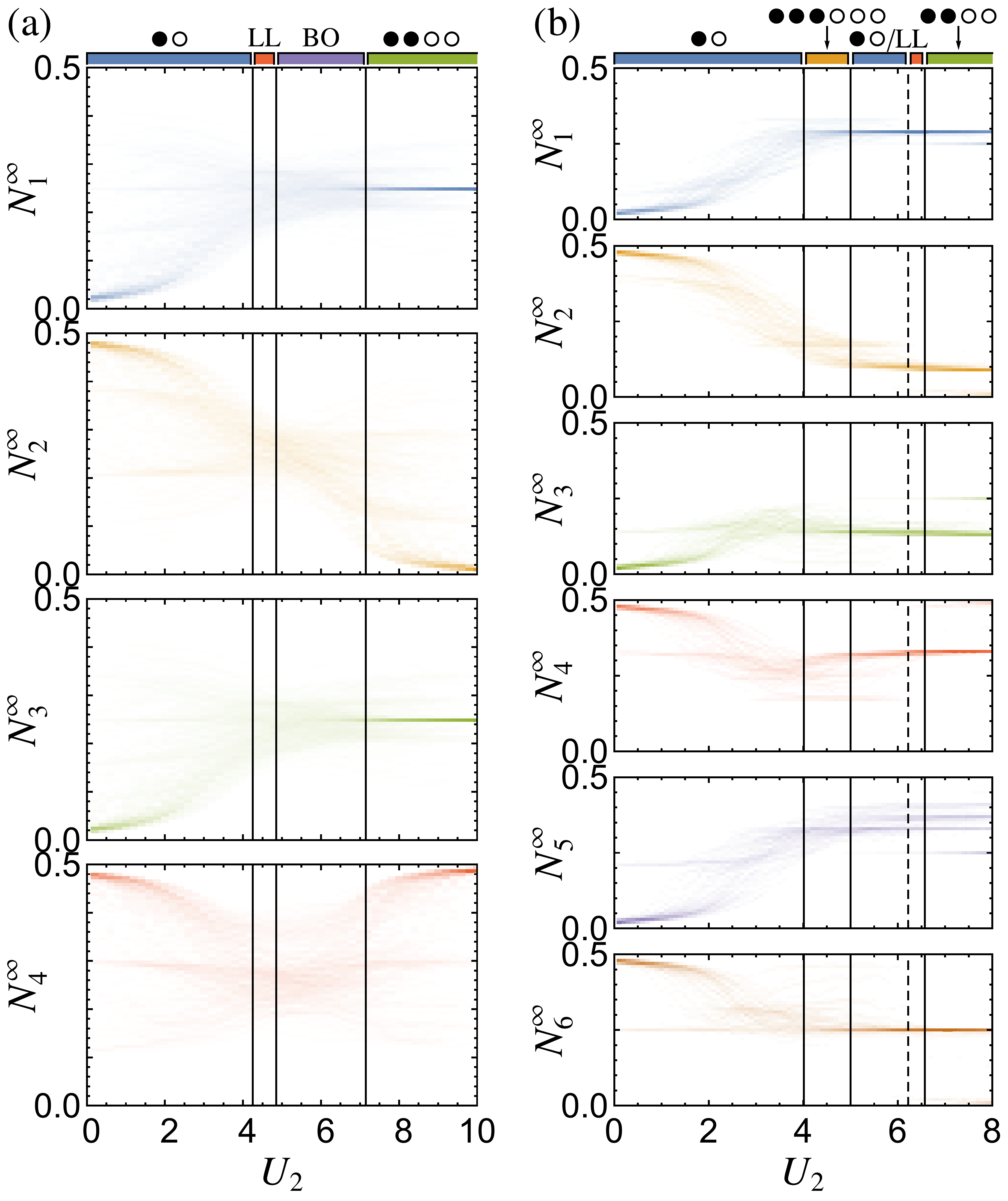}
  \caption{(Color online) Disorder-averaged density plots of the extrapolated
  correlators $N_m^{\infty}$ as in Fig.~\ref{fig:correlators_dis_W1}, but for
  disorder strength $W=3$. \label{fig:correlators_dis_W3}}
\end{figure}

\begin{figure}[t]
  \includegraphics[width=\columnwidth]{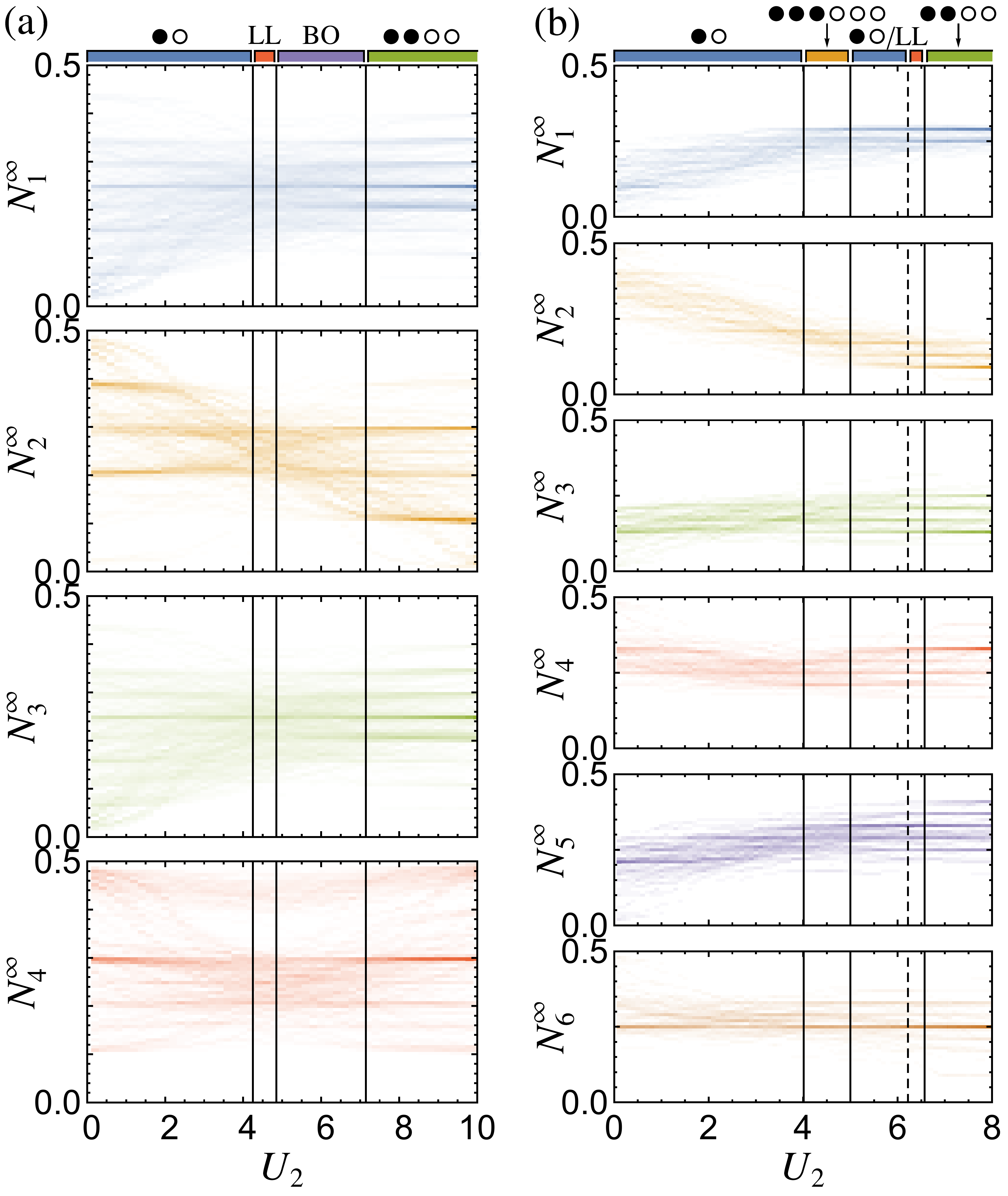}
  \caption{(Color online) Disorder-averaged density plots of the extrapolated
  correlators $N_m^{\infty}$ as in Figs.~\ref{fig:correlators_dis_W1} and
  \ref{fig:correlators_dis_W3}, but for disorder strength
  $W=8$.\label{fig:correlators_dis_W8}}
\end{figure}

Insight into the gradual formation of such a state is given by the correlation
functions. Figures\ \ref{fig:correlators_dis_W1}, \ref{fig:correlators_dis_W3},
and \ref{fig:correlators_dis_W8} show disorder-averaged density plots of the
extrapolated correlators $N_m^\infty$ for disordered systems with $W=1$, $W=3$
and $W=8$, respectively. Note that these correlators are expected to be trivial
(equaling 1/4) at any finite disorder strength in the thermodynamic limit
$L\to\infty$, but here retain a nontrivial structure as $L$ is finite. For a
small disorder strength ($W = 1$, Fig.\ \ref{fig:correlators_dis_W1}), the
correlators for $p=2$ behave very similarly to the non-disordered case. On the
other hand, for $p = 4$ the same disorder strength already has a distinct effect
on the mediating transition region between the phases
$(\bullet\bullet\bullet\circ\circ\circ)$ and $(\bullet\bullet\circ\circ)$, where
the re-emergent phase $(\bullet\circ)$  and the liquid phase are quickly
replaced in favor of a disordered insulating phase with a nonuniform charge
structure. As we increase the disorder ($W = 3$, Fig.\
\ref{fig:correlators_dis_W3}), the correlators $N_m^\infty$ develop distinct
ridges close to rational values $i/L$, with $i=0,\ldots,L/2$ (emphasizing again
the role of the finite disordered unit cell). These ridges are most prominent in
the ranges formerly occupied by the charge-ordered states, where this behavior
is consistent with their fragmentization. In the former transition regions the
correlators cover a broad and continuous range of values. For strong disorder
($W = 8$, Fig.\ \ref{fig:correlators_dis_W8}), however, the rational ridges
spread out over the whole parameter range, which is consistent with the
emergence of a universal fragmented insulating state. Interestingly, this state
still appears to carry some characteristic ordering features. For example, in
the system with $p = 2$, the correlators $N_2^\infty$ and $N_4^\infty$  prefer
rational values $L/i$ with even $i$.

\begin{figure}[t]
  \includegraphics[width=0.95\columnwidth]{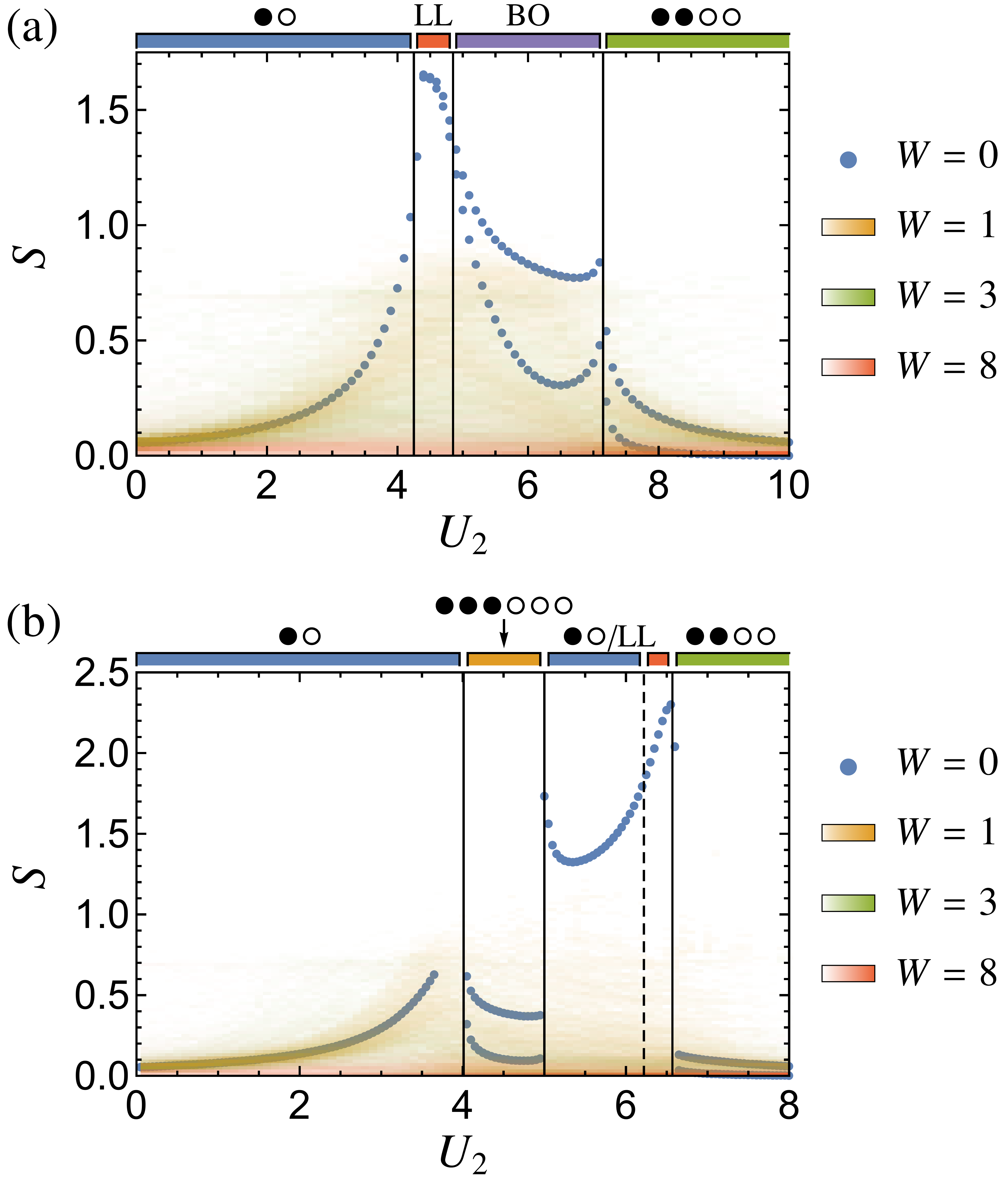}
  \caption{(Color online) Disorder-averaged density plots of the entanglement
  entropy $S$ for the disordered systems specified in
  Fig.~\ref{fig:kinetic_dis_change_W_many_p24}.\label{fig:entropy_dis}}
\end{figure}

Finally, as shown in Fig.~\ref{fig:entropy_dis} the disorder also has a
significant effect on the entanglement entropy. In particular, the entropy
develops strong sample-to-sample fluctuations already for small disorder
strength, and its value in the previously liquid phase is already noticeably
reduced. For large disorder strength the entanglement entropy becomes very small
across the whole parameter range.

\section{Conclusions}\label{sec:conclusions}

In summary, we have investigated the interplay of charge-ordered fermionic
insulating phases that arise from the competition of finite-range interactions
in one dimension, based on the fermionic lattice model in Eq.\eqref{eq:Ham} and
a range of phenomenological and fundamental quantities described in
Sec.~\ref{sec:model}. In the atomic limit of a vanishing kinetic energy term, we
observe a proliferation of competing phases, which for large interaction range
can display remarkably rich internal ordering (see Sec.~\ref{sec:atomic} and the
Appendix). For a finite kinetic energy (Sec.~\ref{sec:finitet}), significant
differences are found already for moderate ranges  $p$ of the interactions, as
we explored by comparing the cases  $p=2$ (for which we recover the known
phenomenology of phase transitions mediated by liquid and bond-ordered phases)
and $p=4$. While in the latter case some complex charge-ordered phases are
quickly suppressed by a finite kinetic energy (see e.g. Fig.~\ref{fig:phase5}),
we observe that the increased variety of competing phases with increasing
interaction range does not imply an essential loss of insulating properties of
the system; see the corresponding panels and phase diagrams in
Figs.~\ref{fig:kinetic_nodisp4}, \ref{fig:infinite_correlators}, and
\ref{fig:entropy_p24}. Instead, we observe, as our two main results, the
survival of apparently direct transitions between two charge-ordered phases
mimicking the atomic limit, as well as the appearance of mediating phases that
display re-emergent simple charge order and exhibit a crossover to liquid
behavior at one of the phase boundaries (see in particular
Fig.~\ref{fig:Nminfp4}).
These two transition scenarios supplement the liquid and bond-ordered phases
encountered in previous studies with a small interaction range, leading to a
rich variety of phases, transitions and crossovers in the system (see
Fig.~\ref{fig:p3phases}). Disorder (explored in Sec.~\ref{sec:disorder}) has the
expected effect of gradual fragmentization and localization of the insulating
and liquid phases, which is particularly visible in the density-density
correlation functions of a finite disordered unit cell
(Figs.~\ref{fig:correlators_dis_W1}-\ref{fig:correlators_dis_W8}).

The results in this work are based on an analytical classification of
charged-ordered states in the atomic limit and extensive numerical
investigations at a moderate finite kinetic energy, both applied to a canonical
model of interacting spinless fermions on a discrete one-dimensional chain.
Complementary approaches could provide useful insights into the exact nature of
the observed transitions and crossovers. Numerically, this could be achieved by
exploring the limit of a very large or small kinetic energy, in which the
approach adopted in this work scales less favorably, or by the investigation of
alternative models, such a spin chains or spinful fermions. Further analytical
progress could be made perturbatively close to the atomic or free limit, or
phenomenologically by field-theoretical approaches of effective, possibly
continuous counterparts of the studied system. In general, our work should
motivate efforts to identify and classify the possible transition scenarios in
systems where the kinetic energy competes with several interactions of different
range. This competition should also persist for excited states, including for
disordered systems that may display many-body localization.
\cite{Nandkishore2015, Abanin2017} These endeavors are left for future
considerations.

\begin{acknowledgments}
We gratefully acknowledge useful discussions with Neil Drummond, Jens Bardarson,
and Fabian Heidrich\mbox{-}Meisner. This research was funded by EPSRC via Grant
No.~EP/P010180/1. Computer time was provided by Lancaster University's High-End
Computing facility. All relevant data present in this publication can be
accessed at \url{http://dx.doi.org/10.17635/lancaster/researchdata/234}.
\end{acknowledgments}

\appendix

\section{Identification of charge-ordered insulating phases in the atomic limit}\label{sec:app1}

In this Appendix we provide classifications of charge-ordered phases at
principal critical densities $Q=Q_m=1/m$ for interaction ranges $p\leq 6$. We
start with the instructive case of $Q=Q_p=1/p$, where the classification can be
carried out for all $p$.

\subsection{Critical density $Q = 1 / p$}\label{sec:insulating-construction}

Assume for the moment that $U_p\ll U_m$ so that the preferable distance between
two fermions is $p$. We then  say that $U_p$ orders the fermions in the ground
state. For example, a charge sequence $\bullet \underbrace{\circ \circ \cdots
\circ}_{p - 1} \bullet \underbrace{\circ \circ \cdots \circ}_{p - 1} \bullet$
has a lower energy than the sequence $\bullet \underbrace{\circ \circ \circ
\cdots \circ}_p \bullet \underbrace{\circ \cdots \circ}_{p - 2} \bullet$. The
ground state has the simple form
\begin{equation}
  \bullet \underbrace{\circ \circ \cdots \circ}_{p - 1} \bullet
  \underbrace{\circ \circ \cdots \circ}_{p - 1} \bullet \underbrace{\circ
  \circ \cdots \circ}_{p - 1} \cdots,
\end{equation}
and its energy is $E_1 = (L/p) U_p = N U_p$, where $L$ is the considered system
size and $N=L/p$ the number of particles.

Next let us inspect how a low value of $U_{p - 1}$ can order the fermions. This
cannot be based on a repeating sequence of segments $\bullet \underbrace{\circ
\circ \cdots \circ}_{p - 2}$, as this would not result in the correct density $1
/ p$. However, by addition of segments $\bullet \underbrace{\circ \circ \cdots
\circ}_p$ we can tailor the density without changing the energy of the system. A
representative corresponding ground-state configuration is
\begin{equation}
  \begin{array}{|l|}
  \hline
  \bullet \underbrace{\circ \circ \cdots \circ}_{p - 2} \bullet
  \underbrace{\circ \circ \cdots \circ}_p\\
  \hline
  \end{array} \begin{array}{|l|}
  \hline
  \bullet \underbrace{\circ \circ \cdots \circ}_{p - 2} \bullet
  \underbrace{\circ \circ \cdots \circ}_p\\
  \hline
  \end{array} \cdots,
\end{equation}
which gives us the correct density $Q = 1 / p$, and results in the energy $E_2 =
(L / 2 p) U_{p - 1} = (N / 2) U_{p - 1}$. Note that this ground state is highly
degenerate---the sections of $p$ and $p-2$ unoccupied sites can be freely
arranged along the system; e.g., all sections with $p - 2$ unoccupied sites
could be placed besides each other without changing the energy of the system.

If one follows this prescription for the general case of ordering driven by
$U_{p-n}$, one obtains ground states of the representative structure
\begin{equation}
  \begin{array}{|l|}
  \hline
  \bullet \underbrace{\circ \circ \cdots \circ}_{p - n} \overbrace{\bullet
    \underbrace{\circ \circ \cdots \circ}_p \bullet \underbrace{\circ \circ
      \cdots \circ}_p \bullet \underbrace{\circ \circ \cdots \circ}_p \cdots}^{n
    - 1 \text{ times}}\\
  \hline
  \end{array} \cdots,
\end{equation}
which have an energy
\begin{eqnarray}
  E_n & = & \frac{L}{1 + p - n + (n - 1) (p + 1)} U_{p + 1 - n} \\
  & = & \frac{L}{n p} U_{p - n + 1} = \frac{N}{n} U_{p - n + 1} . \nonumber
\end{eqnarray}
Again, these ground states are highly degenerate.

We can now determine the conditions in which an arbitrary phase (designated by
step $n$) will dominate the charge ordering. This requires
\begin{equation}
  \underset{k \neq n}{\bigforall} E_n < E_k \quad \Rightarrow \quad
  \underset{k \neq n}{\bigforall} U_{p - n + 1} < \frac{n}{k} U_{p - k +
    1} .
\end{equation}
Renaming $\alpha = p - n + 1$ and $\beta = p - k + 1$, we arrive at the
condition
\begin{equation}
  \underset{\beta \neq \alpha}{\bigforall} U_{\alpha} < \frac{p - \alpha +
  1}{p - \beta + 1} U_{\beta} ,
\end{equation}
which in the main text is expressed as Eq.~\eqref{eq:udom}.

If this condition is fulfilled then the phase with energy $E_{\alpha} = [N/(p -
\alpha + 1)] U_{\alpha}$ is dominant and the ground state consists of $N/(p -
\alpha + 1)$ segments $\bullet \underbrace{\circ \circ \cdots \circ}_{\alpha -
1}$ and $N (p - \alpha)/(p - \alpha + 1)$ segments $\bullet \underbrace{\circ
\circ \cdots \circ}_p$. The ground-state degeneracy is given by
\begin{equation}
  f = \begin{cases}
  \begin{pmatrix}
  N\\
  N / (p - \alpha + 1)
  \end{pmatrix}
  \cdot p & \text{if } 2 \alpha > p\\
  \begin{pmatrix}
  N \frac{p - \alpha}{p - \alpha + 1}\\
  N / (p - \alpha + 1)
  \end{pmatrix}
  \cdot \frac{p (p - \alpha + 1)}{p - \alpha} & \text{otherwise.}
  \end{cases}
\end{equation}
For $2 \alpha \leqslant p$, the degeneracy count reflects the requirement to
exclude cases where blocks of structure $\bullet \underbrace{\circ \circ \cdots
\circ}_{\alpha - 1}$ are adjacent, which then increases their energy by $U_{2
\alpha}$.

\subsection{General properties at higher critical
densities}\label{sec:properties}

To construct the charge-ordered phases at larger critical densities $Q_m=1/m$
with $m<p$, we rely on the following two general properties:

\begin{theorem}
  In any atomic charge configuration of density $Q$, there is at least one
  sequence of  $1 / Q - 1$ or more unoccupied sites.\label{1-CDWth1}
\end{theorem}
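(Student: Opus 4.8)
The plan is to prove this by a simple averaging (pigeonhole) argument on the gaps between consecutive particles, carried out in the periodic/infinite setting that is the natural one for the unit-cell construction used here. First I would fix notation: write the density as $Q = N/L$, where $N$ is the number of particles and $L$ the number of sites in the (cyclically repeated) segment under consideration, so that $1/Q = L/N$ and $1/Q - 1 = (L-N)/N$. Labelling the particles cyclically, let $g_j$ ($j=1,\ldots,N$) denote the number of unoccupied sites in the maximal empty run immediately following the $j$-th particle. There are exactly $N$ such runs, some possibly empty, and each unoccupied site belongs to precisely one of them.

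The key step is then the conservation identity $\sum_{j=1}^{N} g_j = L - N$, since $L-N$ is the total number of unoccupied sites. Hence the average gap length equals $(L-N)/N = 1/Q - 1$. Because the maximum of a finite family of nonnegative reals is at least their mean, there exists an index $j$ with $g_j \ge 1/Q - 1$, and the corresponding run of empty sites is exactly the sequence claimed in the statement. Since each $g_j$ is a nonnegative integer, this in fact yields a run of at least $\lceil 1/Q - 1\rceil$ empty sites whenever $1/Q$ is not an integer, which is marginally stronger than the assertion.

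The one point that needs care — and the closest thing to an obstacle — is the bookkeeping of boundary conditions: for a genuinely open finite chain one would instead have $N+1$ gaps (including the two end segments) summing to $L-N$, giving an average strictly below $1/Q-1$, and indeed the statement can fail there (e.g.\ $\circ\circ\bullet\circ\circ$ at $Q=1/5$). I would therefore state at the outset that the configuration is read as a periodically repeated unit cell (equivalently, in the thermodynamic limit), so that exactly $N$ inter-particle gaps occur and the mean $1/Q-1$ is met; the largest gap then meets or exceeds it, and the argument goes through verbatim.
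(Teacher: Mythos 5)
Your proposal is correct and is essentially the paper's own argument made explicit: the paper's remark that ``any attempt to reduce one of these spacings further necessarily increases another spacing'' is precisely your conservation identity $\sum_j g_j = L-N$, from which the maximal gap must reach the mean $1/Q-1$. Your additional care with periodic versus open boundary conditions is a sensible formalization but does not change the route.
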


\begin{proof}
  When the particles are evenly spread out over the system they are $1 / Q$
  sites apart, \textit{i.e.}, separated by $1/Q-1$ unoccupied sites,
  corresponding to the configuration
  \begin{equation}
  \bullet \underbrace{\circ \circ \cdots \circ}_{1 / Q - 1} \bullet
  \underbrace{\circ \circ \cdots \circ}_{1 / Q - 1} \bullet
  \underbrace{\circ \circ \cdots \circ}_{1 / Q - 1} .
  \end{equation}
  Any attempt to reduce one of these spacings further necessarily increases
  another spacing.
\end{proof}

\begin{theorem}
  For any atomic ground state of the system, the largest sequence of unoccupied
  states cannot exceed $p$ sites.\label{1-CDWth2}
\end{theorem}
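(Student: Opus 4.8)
The plan is to argue by contradiction. Suppose some atomic ground state contains a run of $g\ge p+1$ empty sites, bounded on the left by an occupied site $L$ and on the right by an occupied site $R$. Since this block has length $g\ge p+1$, the sites $L$ and $R$ are more than $p$ apart and therefore do not interact, and neither do any two particles sitting on opposite sides of the block. The preparatory observation is a counting fact about the relevant densities $Q=1/m$ with $m=2,\ldots,p+1$: the mean spacing between particles is $1/Q-1=m-1\le p$, so the presence of a gap of $\ge p+1$ empty sites (which exceeds the mean) forces some \emph{other} gap to fall below the mean, hence to have size $\le p-1$; the two particles bounding that small gap are at distance $\le p$ and therefore interact, so the configuration contains a particle $X$ with strictly positive local interaction energy $\epsilon(X)>0$. (Property~\ref{1-CDWth1} is the complementary bound, from below, on the size of the largest gap.)

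The core move is then to delete such a crowded particle $X$ and re-insert one particle somewhere inside the large gap. If $g\ge 2p+1$ there is enough room to place the new particle at distance $>p$ from both $L$ and $R$ (and hence from everything, since all other particles are even farther). Its interaction energy in the new configuration is $0$, whereas deleting $X$ saved $\epsilon(X)>0$, so the total energy strictly decreases, contradicting minimality. This already settles the claim whenever the largest gap is at least $2p+1$, and it is the template for the remaining cases.

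For the harder range $p+1\le g\le 2p$, every site inside the gap lies within distance $p$ of $L$ or of $R$, so the trivial placement above is unavailable, and I would first try to reduce to the easy case. A maximal run of mutually interacting particles, flanked by gaps of size $\ge p$ on both sides (an ``isolated cluster''), can be shifted rigidly at \emph{zero} energy cost: such a shift changes only the lengths of its two flanking gaps — one grows, one shrinks by the same amount — and both stay $\ge p$, hence non-interacting. Concatenating such free shifts, one can consolidate all the ``excess'' empty space (the total by which the $\ge p$--gaps exceed $p$) into a single gap; if this consolidated gap reaches size $\ge 2p+1$ we are back to the easy case. Otherwise the total excess is at most $p$, and I would instead trade empty sites directly between the large gap and a small (interacting) gap of size $g_s\le p-1$: moving $g-p$ empty sites out of the large gap leaves it at size exactly $p$ (still non-interacting, so nothing is lost there) while enlarging the small gap to $g_s+g-p$; as soon as this reaches $p$ the pair that used to straddle it no longer interacts, dropping the energy by at least that pair's contribution $U_{g_s+1}>0$ — again a contradiction. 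This closes the argument whenever $g_s+g\ge 2p$, in particular whenever some interacting gap has the maximal size $p-1$.

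The step I expect to be the genuine obstacle is the residual corner: $g=p+1$ (minimal excess) together with \emph{all} interacting gaps having size $\le p-2$, where neither ``insert far from everything'' nor ``push a small gap over the non-interacting threshold'' applies, and only a one-site marginal transfer is available. There one must balance the energy released by deleting the most crowded particle (which has several interactions, of possibly small but positive strengths $U_m$) against the energy that particle inevitably picks up when re-inserted adjacent to a cluster inside a gap of length only $p+1$; this balance cannot be won by the naive ``repulsion decreases with distance'' heuristic, precisely because the $U_m$ are not assumed monotone. I would try to force it using the ground-state property itself — that \emph{no} single-particle move lowers the energy, so the environment of the most crowded particle cannot be ``looser'' than the candidate destination — ideally chaining one more rearrangement to fall back onto one of the clean cases above. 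Making this last accounting watertight is where the real work of the proof lies.
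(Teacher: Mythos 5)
Your easy case ($g\ge 2p+1$), your pigeonhole argument that some particle $X$ has $\epsilon(X)>0$, and your zero-cost rigid shifts are all correct and are exactly the right elementary moves. But the proof has a genuine gap, and it is the one you flag yourself: the residual corner with $g$ only slightly larger than $p$ and all interacting gaps of size at most $p-2$ is not handled, and within your framing -- a single large gap whose excess over $p$ is bounded by $p$ -- it genuinely cannot be settled by comparing individual $U_m$'s, since no monotonicity or convexity of the interactions is assumed.

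The missing idea is that the statement is about the unit cell of a periodic ground state, so a gap of $p+1$ (or more) empty sites recurs in \emph{every} cell, and the excess empty space you want to consolidate is not at most $p$ but grows linearly with the number of cells considered. This is precisely how the paper's proof closes the case you are stuck on: take $q$ consecutive unit cells as one super-cell and move the surplus empty site of each cell's large gap to the end of the super-cell. Each such move is one of your zero-cost shifts (every gap involved stays at least $p$ sites long), and after $q$ of them the surplus forms a terminal run of $q\ge p+1$ empty sites next to the remaining $p$-site gap. Now your own easy case applies uniformly: since $Q>1/(p+1)$ forces some particle to carry a strictly positive interaction energy $E_{\Delta}$, relocating that particle into the long empty run removes $E_{\Delta}$ without creating any new interaction, lowering the energy density by $E_{\Delta}/(qN/Q)>0$, a contradiction for every gap size $\ge p+1$ with no case analysis and no assumption on the relative sizes of the $U_m$. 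In short, your consolidation strategy fails only because it was confined to one cell; pooling the excess across the periodic repetitions, which the setting provides for free, is what makes the argument close.
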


\begin{proof}
  Assume that there exists a ground state unit cell with a sequence of $(p + 1)$
  unoccupied sites, which we place to the very right in the cell by exploiting
  translational invariance. A periodic arrangement of these unit cells then
  takes the form
  \begin{equation}
  \underbrace{\begin{array}{|l|}
    \hline
    \underbrace{\bullet ? ? \cdots ? \bullet}_{\text{Block } A}
    \underbrace{\circ \circ \cdots \circ}_p \circ\\
    \hline
    \end{array} \cdots \begin{array}{|l|}
    \hline
    \underbrace{\bullet ? ? \cdots ? \bullet}_{\text{Block } A}
    \underbrace{\circ \circ \cdots \circ}_p \circ\\
    \hline
    \end{array}}_{q\text{ times}} ,
  \end{equation}
  and has an energy density $E_A/(N / Q)$ where $E_A$ is the energy of the
  denoted block $A$. We can now move the right-most unoccupied sites in each
  cell to the very end of this chain without changing the energy density,
  resulting in the rearranged configuration
  \begin{equation}
    \begin{array}{|l|}
    \hline
    \underbrace{\bullet ? ? \cdots ? \bullet}_{\text{Block } A}
    \underbrace{\circ \circ \cdots \circ}_p\\
    \hline
    \end{array} \cdots \begin{array}{|l|}
    \hline
    \underbrace{\bullet ? ? \cdots ? \bullet}_{\text{Block } A}
    \underbrace{\circ \circ \cdots \circ}_p  \underbrace{\circ \circ \cdots
      \circ}_q\\
    \hline
    \end{array} . \label{eq:1-CDWth2unitcell2}
  \end{equation}
  As guaranteed by the finite density $Q>1/{p+1}$,  block $A$ contains at least
  one particle that makes a finite contribution $E_{\Delta}$  to the energy  of
  this state. We can now take this particle from block $A$ and place it into the
  terminating segment of the chain,
  \begin{equation}
    \begin{array}{|l|}
    \hline
    \underbrace{\bullet ? ? \cdots ? \bullet}_{\text{Block } A}
    \underbrace{\circ \circ \cdots \circ}_p\\
    \hline
    \end{array} \cdots \begin{array}{|l|}
    \hline
    \underbrace{\bullet ? ? \cdots ? ?}_{\text{Block } A'}
    \underbrace{\circ \cdots \circ}_{p } \bullet \underbrace{\circ \circ
      \cdots \circ}_{q-1}\\
    \hline
    \end{array},
  \end{equation}
  where block $A'$ is block $A$ with the particle  replaced by a hole. Block
  $A'$ has energy $E_A - E_{\Delta}$, while the displaced particle no longer
  contributes to the energy of the state as soon as $q\geq p+1$, so that it is
  surrounded by at least $p$ unoccupied sites on both sides. This segment can
  now serve as a new unit cell with an energy density
  \begin{equation}
    \frac{pE_A - E_{\Delta}}{pN / Q} = \frac{E_A}{N / Q} -
    \frac{E_{\Delta}}{pN / Q},
  \end{equation}
  which lowers the energy in contradiction to our assumptions. A similar process
  can be used to show that a ground state cannot have a sequence of $(p + 2)$ or
  more unoccupied sites. Thus, we conclude that the largest spacing in any
  ground state has at most $p$ unoccupied sites.
\end{proof}

\begin{table}[t]
  \centering
  \caption{Charge-ordered ground states (GS) and their energies in the atomic limit of systems with commensurable particle densities
    $Q = 1 / (p - 1)$, where $p$ is the range of the interactions.
    The degeneracy of these states is denoted as $f$, which accounts for the translational freedom and the possibility of a mirror-reflected phase.
    $L_{\max}$ is the maximal size of the considered unit cell.
    \label{tab:1-CDWphases1}}
  \begin{tabular}{ccc}
    \hline \hline
    GS unit cell & Energy density & $f$ \\
    \hline
    \multicolumn{3}{c}{$p = 3, Q = 1/2, L_\text{max} = 28$}\\
    \hline
    $\bullet \circ$ & $ U_2/2$ & 2 \\
    $\bullet \bullet \circ \circ$ & $ (U_1 + U_3)/4$ &
    4 \\
    $\bullet \bullet \bullet \circ \circ \circ$ &
    $ (2 U_1 + U_2)/6$ & 6 \\
    \hline
    \multicolumn{3}{c}{$p = 4, Q = 1/3, L_\text{max} = 36$}\\
    \hline
    $\bullet \circ \circ$ & $ U_3/3$ & 3 \\
    $\bullet \bullet \circ \circ \circ \circ$ & $
    U_1/6$ & 6 \\
    $\bullet \circ \bullet \circ \circ \circ$ & $
    (U_2 + U_4)/6$ & 6 \\
    $\bullet \bullet \circ \circ \circ \bullet \circ \circ \circ$ &
    $ (U_1 + 2 U_4)/9$ & 9 \\
    $\bullet \circ \bullet \circ \bullet \circ \circ \circ \circ$ &
    $ (2 U_2 + U_4)/9$ & 9 \\
    $\bullet \circ \bullet \circ \circ \bullet \circ \bullet \circ \circ
    \circ \circ$ & $ (2 U_2 + U_3)/12$ & 12 \\
    $\bullet \bullet \bullet \circ \circ \circ \circ \bullet \circ \bullet
    \circ \circ \circ \circ \bullet \circ \bullet \circ \circ \circ \circ$ &
    $ (2 U_1 + 3 U_2)/21$ & $21$ \\
    \hline
    \multicolumn{3}{c}{$p = 5, Q = 1/4, L_\text{max} = 32$}\\
    \hline
    $\bullet \circ \circ \circ$ & $ U_4/4$ & 4 \\
    $\bullet \circ \circ \bullet \circ \circ \circ \circ$ &
    $ (U_3 + U_5)/8$ & 8 \\
    $\bullet \circ \bullet \circ \circ \circ \circ \circ$ &
    $ U_2/8$ & 8 \\
    $\bullet \circ \bullet \circ \circ \circ \circ \bullet \circ \circ
    \circ \circ$ & $ (U_2 + 2 U_5)/12$ & 12 \\
    $\bullet \circ \circ \bullet \circ \circ \bullet \circ \circ \circ
    \circ \circ$ & $ 2 U_3/12$ & 12 \\
    $\bullet \bullet \circ \circ \circ \circ \bullet \circ \circ \circ
    \circ \bullet \circ \circ \circ \circ$ & $ (U_1 + 3 U_5)/16$ & 16
    \\
    $\bullet \bullet \circ \circ \circ \circ \circ \bullet \bullet \circ
    \circ \circ \circ \circ \bullet \circ \circ \circ \circ \circ$ &
    $ 2 U_1/20$ & 20 \\
    \hline
    \multicolumn{3}{c}{$p = 6, Q = 1/5, L_\text{max} = 40$}\\
    \hline
    $\bullet \circ \circ \circ \circ$ & $U_5 / 5$ & 5\\
    $\bullet \circ \circ \circ \circ \circ \bullet \circ \circ \circ$ & $(U_4 +
    U_6) / 10$ & 10\\
    $\bullet \circ \circ \circ \circ \circ \circ \bullet \circ \circ$ & $U_3 /
    10$ & 10\\
    $\bullet \circ \circ \circ \circ \circ \bullet \circ \circ \bullet \circ
    \circ \circ \circ \circ$ & $(U_3 + 2 U_6) / 15$ & 15\\
    $\bullet \circ \circ \circ \circ \circ \circ \bullet \bullet \circ \circ
    \circ \circ \circ \circ$ & $U_1 / 15$ & 15\\
    $\bullet \circ \circ \circ \circ \circ \circ \bullet \circ \circ \circ
    \bullet \circ \circ \circ$ & $2 U_4 / 15$ & 15\\
    $\bullet \circ \circ \circ \circ \circ \bullet \circ \bullet \circ \circ
    \circ \circ \circ \bullet \circ \circ \circ \circ \circ$ & $(U_2 + 3 U_6) /
    20$ & 20\\
    $\bullet \circ \circ \circ \circ \circ \bullet \bullet \circ \circ \circ
    \circ \circ \bullet \circ \circ \circ \circ \circ \bullet \circ \circ \circ
    \circ \circ$ & $(U_1 + 4 U_6) / 25$ & 25\\
    $\bullet \circ \circ \circ \circ \circ \circ \bullet \circ \bullet \circ
    \circ \circ \circ \circ \circ \bullet \circ \bullet \circ \circ \circ \circ
    \circ \circ$ & $2 U_2 / 25$ & 25\\
    \hline \hline
  \end{tabular}
\end{table}

\begin{table*}[t]
  \centering
  \caption{Charge-ordered ground states as Table \ref{tab:1-CDWphases1}, but for systems with particle densities
    $Q = 1 / (p - 2)$.\label{tab:1-CDWphases2}}
  \begin{tabular}{ccc}
    \hline \hline
    GS unit cell & Energy density & $f$ \\
    \hline
    \multicolumn{3}{c}{$p = 4, Q = 1/2, L_\text{max} = 26$}\\
    \hline
    $\bullet \circ$ & $ (U_2 + U_4)/2$ & 2 \\
    $\bullet \bullet \circ \circ$ & $ (U_1 + U_3 + 2
    U_4)/4$ & 4 \\
    $\bullet \bullet \bullet \circ \circ \circ$ &
    $ (2 U_1 + U_2 + U_4)/6$ & 6 \\
    $\bullet \bullet \bullet \bullet \circ \circ \circ \circ$ &
    $ (3 U_1 + 2 U_2 + U_3)/8$ & 8 \\
    $\bullet \bullet \circ \bullet \circ \circ \bullet \circ$ &
    $ (U_1 + 2 U_2 + 3 U_3)/8$ & 8 \\
    \hline
    \multicolumn{3}{c}{$p = 5, Q = 1/3, L_\text{max} = 27$}\\
    \hline
    $\bullet \circ \circ$ & $ U_3/3$ & $3$ \\
    $\bullet \circ \bullet \circ \circ \circ$ & $
    (U_2 + U_4)/6$ & $6$ \\
    $\bullet \bullet \circ \circ \circ \circ$ & $
    (U_1 + U_5)/6$ & $6$ \\
    $\bullet \bullet \circ \circ \circ \bullet \circ \circ \circ$ &
    $ (U_1 + 2 U_4 + 2 U_5)/9$ & $9$ \\
    $\bullet \circ \bullet \circ \bullet \circ \circ \circ \circ$ &
    $ (2 U_2 + U_4 + U_5)/9$ & $9$ \\
    $\bullet \circ \bullet \circ \circ \bullet \circ \bullet \circ \circ
    \circ \circ$ & $ (2 U_2 + U_3 + 3 U_5)/12$ & $12$ \\
    $\bullet \bullet \circ \circ \bullet \circ \circ \circ \bullet \circ
    \circ \circ \bullet \circ \circ$ & $ (U_1 + 2 U_3 + 4 U_4)/15$ &
    $15$ \\
    $\bullet \bullet \bullet \circ \circ \circ \circ \circ \bullet \bullet
    \circ \circ \circ \circ \circ$ & $ (3 U_1 + U_2)/15$ & 15 \\
    $\bullet \bullet \circ \circ \bullet \bullet \circ \circ \circ \circ
    \circ \bullet \bullet \circ \circ \circ \circ \circ$ & $ (3
    U_1 + U_3 + 2 U_4 + U_5)/18$ & $18$ \\
    $\bullet \bullet \circ \circ \bullet \circ \circ \bullet \bullet \circ
    \circ \circ \circ \circ \bullet \bullet \circ \circ \circ \circ \circ$ &
    $ (3 U_1 + 2 U_3 + 2 U_4)/21$ & $21$ \\
    $\bullet \bullet \bullet \circ \circ \circ \circ \bullet \circ \bullet
    \circ \circ \circ \circ \bullet \circ \bullet \circ \circ \circ \circ$ &
    $ (2 U_1 + 3 U_2 + 3 U_5)/21$ & $21$ \\
    $\bullet \circ \bullet \circ \circ \circ \circ \circ \bullet \bullet
    \bullet \circ \circ \circ \circ \circ \bullet \bullet \bullet \circ \circ
    \circ \circ \circ$ & $ (4 U_1 + 3 U_2)/24$ & 24 \\
    \hline \hline
  \end{tabular}
\end{table*}

\begin{table*}[t]
  \centering
  \caption{Charge-ordered ground states as Table \ref{tab:1-CDWphases1}, but for systems at half filling ($Q = 1 / 2$).
  \label{tab:1-CDWphases3}}
  \begin{tabular}{ccc}
    \hline \hline
    GS unit cell & Energy density & $f$\\
    \hline
    \multicolumn{3}{c}{$p = 5, Q = 1/2, L_\text{max} = 26$}\\
    \hline
    $\bullet \circ$ & $ (U_2 + U_4)/2$ & 2\\
    $\bullet \bullet \circ \circ$ & $ (U_1 + U_3 + 2
    U_4 + U_5)/4$ & 4\\
    $\bullet \bullet \circ \bullet \circ \circ$ &
    $ (U_1 + U_2 + 2 U_3 + U_4 + U_5)/6$ & $2 \times 6$\\
    $\bullet \bullet \bullet \circ \circ \circ$ & $ (2 U_1 + U_2
    + U_4 + 2 U_5)/6$ & 6\\
    $\bullet \bullet \circ \bullet \circ \circ \bullet \circ$ &
    $ (U_1 + 2 U_2 + 3 U_3 + 3 U_5)/8$ & 8\\
    $\bullet \bullet \bullet \bullet \circ \circ \circ \circ$ &
    $ (3 U_1 + 2 U_2 + U_3 + U_5)/8$ & 8\\
    $\bullet \bullet \circ \bullet \bullet \circ \circ \bullet \circ \circ$
    & $ (2 U_1 + U_2 + 4 U_3 + 3 U_4)/10$ & 10\\
    $\bullet \bullet \bullet \bullet \bullet \circ \circ \circ \circ \circ$
    & $ (4 U_1 + 3 U_2 + 2 U_3 + U_4)/10$ & 10\\
    \hline
    \multicolumn{3}{c}{$p = 6, Q = 1/2, L_\text{max} = 26$}\\
    \hline
    $\bullet \circ$ & $ (U_2 + U_4 + U_6)/2$ & 2\\
    $\bullet \bullet \circ \circ$ & $ (U_1 + U_3 + 2
    U_4 + U_5)/4$ & 4\\
    $\bullet \bullet \circ \bullet \circ \circ$ &
    $ (U_1 + U_2 + 2 U_3 + U_4 + U_5 + 3 U_6)/6$ & $2 \times 6$\\
    $\bullet \bullet \bullet \circ \circ \circ$ & $ (2 U_1 + U_2
    + U_4 + 2 U_5 + 3 U_6)/6$ & 6\\
    $\bullet \circ \bullet \bullet \circ \bullet \circ \circ$ &
    $ (U_1 + 2 U_2 + 3 U_3 + 3 U_5 + 2 U_6)/8$ & 8\\
    $\bullet \bullet \bullet \bullet \circ \circ \circ \circ$ &
    $ (3 U_1 + 2 U_2 + U_3 + U_5 + 2 U_6)/8$ & 8\\
    $\bullet \bullet \circ \bullet \bullet \circ \circ \bullet \circ \circ$
    & $ (2 U_1 + U_2 + 4 U_3 + 3 U_4 + 3 U_6)/10$ & 10\\
    $\bullet \bullet \bullet \bullet \bullet \circ \circ \circ \circ \circ$
    & $ (4 U_1 + 3 U_2 + 2 U_3 + U_4 + U_6)/10$ & 10\\
    $\bullet \circ \bullet \bullet \circ \bullet \circ \bullet \circ \circ
    \bullet \circ$ & $ (U_1 + 4 U_2 + 3 U_3 + 2 U_4 + 5 U_5)/12$ &
    12\\
    $\bullet \bullet \bullet \bullet \bullet \bullet \circ \circ \circ
    \circ \circ \circ$ & $ (5 U_1 + 4 U_2 + 3 U_3 + 2 U_4 + U_5)/12$
    & 12\\
    $\bullet \bullet \bullet \circ \bullet \circ \circ \circ \bullet \circ
    \bullet \bullet \bullet \circ \circ \bullet \circ \circ$ & $
    (4 U_1 + 4 U_2 + 4 U_3 + 5 U_4 + 2 U_5 + 3 U_6)/18$ & 18\\
    $\bullet \bullet \bullet \circ \bullet \circ \circ \circ \bullet
    \bullet \circ \bullet \bullet \circ \circ \bullet \circ \circ$ &
    $ (4 U_1 + 3 U_2 + 5 U_3 + 5 U_4 + 2 U_5 + 3 U_6)/18$ & $2 \times
    18$\\
    \hline \hline
  \end{tabular}
\end{table*}

\subsection{Specific cases}\label{sec:insulatingPhasesList}

The general properties listed above allow us to significantly reduce the
effective charge-configuration space of ground-state candidates. Based on
Property 1 and exploiting the system's translational invariance, we can place
the guaranteed large spacing towards the front of the sequence, and therefore
fix the first $1 / Q$ sites to
\begin{equation}
  \bullet \underbrace{\circ \circ \cdots \circ}_{1 / Q - 1} .
  \label{eq:1-CDWth1fix}
\end{equation}
This reduces the effective configuration space to reduced systems of size $(N -
1) / Q$ and $(N - 1)$ particles. Based on Property 2, we then can remove any
charge configuration with unoccupied segments exceeding $p$, which at the same
time significantly reduces the maximal unit-cell size encountered in the
construction.

For each admissible state obtained in this way, we determined the general
expression of the ground-state energy density as a function of the interaction
parameters $\{ U_m \}$. Next, we discarded symbolically all configurations that
can never drop below the energy densities of all other charge configurations.
The final list contains the energy densities of all phases that can have the
lowest energy for some set of values $\{ U_m \}$. This leads to the
charge-ordered phases listed in the following tables.

Table \ref{tab:1-CDWphases1} lists the unit cells and energy densities for
critical densities $Q = 1 / (p - 1)$ and $p = 3$, $4$, $5$, $6$, with the
unit-cell size limited to the specified values $L_{\max}$. In these cases we are
highly confident that there are no ground states with larger unit cells. Table
\ref{tab:1-CDWphases2} presents the unit cells and energy densities for  $Q = 1
/ (p - 2)$ with $p = 4$ and 5. Notice that for $p = 5$ we find ground-state unit
cells of size up to $(L_{\max} - 3)$, so that we cannot fully exclude the
possibility of additional ground-state configurations with even larger unit
cells. Finally, results for $Q = 1 / 2, p = 5$ and $6$ are presented in Table
\ref{tab:1-CDWphases3}. Amongst the combinations listed in Table
\ref{tab:insulatingphases}, this leaves the case $p=6$, $Q=1/3$ where we find 63
phases with $L\leq L_{\max}=27$, and $p=6$, $Q=1/4$, where we find 23 phases
with $L\leq L_{\max}=32$, which defines the limit of our computational
capabilities; the corresponding phases are therefore not listed here.

In all these tables, the degeneracy of the states accounts for the translational
displacement by a finite number of sites (up to the size of the unit cell),  as
well as for the possible duplication by a distinct mirror-reflected phase.


%

\end{document}